\newtheorem{theorem}{Theorem}
\newtheorem{lemma}{Lemma}
\newtheorem{corollary}{Corollary}
\theoremstyle{definition}
\newtheorem{definition}{Definition}
\theoremstyle{remark}
\newtheorem{remark}{Remark}
\DeclarePairedDelimiter{\ceil}{\lceil}{\rceil}
\DeclarePairedDelimiter{\floor}{\lfloor}{\rfloor}
\DeclarePairedDelimiter{\norm}{\|}{\|}
\newcommand{\ind}[1]{\mathbbm{1}\left[{#1}\right]}
\newcommand{\pr}{\mathbb{P}}
\newcommand{\E}{\mathbb{E}}
\newcommand{\R}{\mathbb{R}}
\newcommand{\N}{\mathbb{N}}
\title{\Large Multi-dimensional Approximate Counting\thanks{This work was supported by NSF Grant CCF-2221980.}}
\author{Dingyu Wang\\University of Michigan\\wangdy@umich.edu}
\date{}
\begin{document}
\maketitle

\begin{abstract}\small\baselineskip=9pt
The celebrated Morris counter uses $\log_2\log_2 n + O(\log_2 \sigma^{-1})$ bits to count up to $n$ with a relative error $\sigma$, where if $\hat{\lambda}$ is the estimate of the current count $\lambda$, then $\E|\hat{\lambda}-\lambda|^2 <\sigma^2\lambda^2$. The Morris counter was proved to be optimal in space complexity by Nelson and Yu \cite{nelson2022optimal}, even when considering the error tails. A natural generalization is \emph{multi-dimensional} approximate counting. Let $d\geq 1$ be the dimension. The count vector $x\in \N^d$ is incremented entry-wisely over a stream of coordinates $(w_1,\ldots,w_n)\in [d]^n$, where upon receiving $w_k\in[d]$, $x_{w_k}\gets x_{w_k}+1$. A \emph{$d$-dimensional approximate counter} is required to count $d$ coordinates simultaneously and return an estimate $\hat{x}$ of the count vector $x$. Aden-Ali, Han, Nelson, and Yu \cite{aden2022amortized} showed that the trivial solution of using $d$ Morris counters that track $d$ coordinates separately is already optimal in space, \emph{if each entry only allows error relative to itself}, i.e., $\E|\hat{x}_j-x_j|^2<\sigma^2|x_j|^2$ for each $j\in [d]$.  However, for another natural error metric---the \emph{Euclidean mean squared error} $\E |\hat{x}-x|^2$---we show that using $d$ separate Morris counters is sub-optimal. 

In this work, we present a simple and optimal $d$-dimensional counter with Euclidean relative error $\sigma$, i.e., $\E |\hat{x}-x|^2 <\sigma^2|x|^2$ where $|x|=\sqrt{\sum_{j=1}^d x_j^2}$, with a matching lower bound. We prove the following.
\begin{itemize}
    \item There exists a $(\log_2\log_2 n+O(d\log_2 \sigma^{-1}))$-bit $d$-dimensional counter with relative error $\sigma$.
    \item  Any $d$-dimensional counter with relative error $\sigma$ takes at least $(\log_2\log_2 n+\Omega(d\log_2 \sigma^{-1}))$ bits of space.
\end{itemize}
The upper and lower bounds are proved with ideas that are strikingly simple. The upper bound is constructed with a certain variable-length integer encoding and the lower bound is derived from a straightforward volumetric estimation of sphere covering. 
\end{abstract}

\section{Introduction}
In 1978, Morris \cite{morris1978counting} invented the classic \emph{approximate counter} which can count up to $n$ with $\log_2\log_2 n + O(\log_2 \sigma^{-1})$ bits, returning an estimate of the current count with a relative error $\sigma$. Such approximate counters are invented mainly to save space. However, it was noticed quite recently that approximate counters can also be much \emph{faster} on modern hardware in comparison to the deterministic counter due to their low \emph{write complexity} \cite{tristan2015efficient,steele2017adding,jayaram2024streaming}. 

% \begin{definition}[Morris counter]\label{def:morris}
%     For any $a>0$, the \emph{Morris counter} with parameter $a$ \cite{morris1978counting} stores an index $r$, initially 0. Upon an increment, with probability $(1+1/a)^{-r}$, $r\gets r+1$. Upon query, it returns $a((1+1/a)^r-1)$.
% \end{definition}

In the original paper \cite{morris1978counting}, Morris analyzed the mean and variance of the estimates. Flajolet \cite{flajolet1985approximate} analyzed the mean and variance of the \emph{index}\footnote{Roughly speaking, Morris counter stores an index $v$ to represent a count around $2^v$. Therefore, to count up to $n$, the index increases to $\log n$ and thus it takes $\log \log n$ to store the index.} of the Morris counter, from which a quantified space bound can be derived with Chebyshev's bound. Nelson and Yu \cite{nelson2022optimal} analyzed the \emph{tail} of the index, obtaining a sharper dependence on the failure probability. Specifically, for increments up to $\lambda$ and $\epsilon,\delta>0$, if an estimate $\hat{\lambda}$ is produced with $\pr(|\hat{\lambda}-\lambda|<\epsilon \lambda)\geq 1-\delta$ then the Morris counter needs $O(\log\log n + \log\epsilon^{-1}+\log\log \delta^{-1})$ space, and this is proved to be optimal  \cite{nelson2022optimal}.\footnote{Technically, one needs to keep a separate deterministic counter for small $\lambda=O(a)$ to obtain the optimal failure rate \cite{nelson2022optimal}.} 

It is the common scenario in real-world applications that many different counters are maintained simultaneously. For example, the original motivation of Morris is to count the number of each trigram in texts \cite{lumbroso2018story,nelson2022optimal} where $d=26^3$ different trigrams are counted simultaneously. Aden-Ali, Han, Nelson, and Yu \cite{aden2022amortized} showed that, essentially, $d$ independent counters cannot be compressed as long as each counter is required to report an estimate with an error \emph{relative to itself}. However, as we will later show, the space \emph{can be compressed} if the error metric is the $d$-dimensional Euclidean norm. While the per-entry relative error requirement considered in \cite{aden2022amortized} is reasonable  when an approximation for \emph{each individual entry} is needed, we remark that in applications where an approximation of the \emph{whole count vector} is needed, the Euclidean error metric is more natural. 

We now formally define the problem of $d$-dimensional counting in terms of Euclidean mean squared error. 
Let $(e_1,e_2,\ldots,e_d)$ be the standard basis of $\R^d$ where $e_k$ has its $k$th component being one and other components being zero. 

\begin{definition}[(Euclidean) $d$-dimensional counting]\label{def:counter}
Let $n\geq 1$ be the maximum stream length, $d\geq 1$ be the dimension, and $\sigma>0$ be the relative error.
An \emph{$(n,d,\sigma)$-counter} is a randomized data structure that, for any input sequence $(w_1,\ldots,w_k)\in [d]^k$ with $k\leq n$, returns an estimate $\hat{x}$ of $x=\sum_{j=1}^k e_{w_{j}}$ with $\E|\hat{x}-x|^2<\sigma^2|x|^2$, where $|x|=\sqrt{\sum_{j=1}^d x_j^2}$ is the Euclidean length. We denote the state space of the counter as $\Omega$ with $|\Omega|\geq 1$. One counter thus uses $\log_2 |\Omega|$ bits of space.
\end{definition}
Recall that the original Morris counter \cite{morris1978counting} with parameter $a\geq 1$, denoted as $\mathrm{Morris}(a)$, returns an unbiased estimate of the count with relative variance $O(1/a)$. Thus by definition,  $\mathrm{Morris}(a)$ is an $(n,1,O(\sqrt{1/a}))$-counter. The error here is measured by the \emph{mean squared error} $\E|\hat{x}-x|^2$, which is equivalent to measure the variance if the estimator is unbiased. We note that there are, of course, many other reasonable error metrics to consider, depending on the specific applications. The mean squared error is a good starting point for a new algorithmic problem due to its simple probabilistic/geometric structures. 

One natural way to construct a $d$-dimensional counter is to simply use $d$ separate Morris counters, which is the original method that Morris used to count trigrams \cite{lumbroso2018story}. Suppose now we have $d$ $(n,1,\sigma)$-counters to count $x_1,\ldots, x_d$. Let the estimates be $\widehat{x_1},\ldots,\widehat{x_d}$ respectively. Then we have
\begin{align*}
    \E |\hat{x}-x|^2 &= \E \sum_{j=1}^d |\widehat{x_j}-x_j|^2=\sum_{j=1}^d \E |\widehat{x_j}-x_j|^2
    \intertext{Since for any $j$, $\widehat{x_j}$ is an $(n,1,\sigma)$-counter for $x_j$, we have $\E |\widehat{x_j}-x_j|^2< \sigma^2x_j^2$. Thus,}
    \E |\hat{x}-x|^2  &< \sum_{j=1}^d \sigma^2x_j^2=\sigma^2|x|^2.
\end{align*}
We thus see $d$ $(n,1,\sigma)$-counters do form an $(n,d,\sigma)$-counter (Definition \ref{def:counter}). Each $(n,1,\sigma)$-counter can be implemented with a $\mathrm{Morris}(O(\sigma^{-2}))$ counter, taking $\log_2\log_2 n + O(\log_2\sigma^{-1})$ bits. Thus the total space needed is $d\log_2\log_2 n + O(d\log_2\sigma^{-1})$ bits, which is sub-optimal for this task. 

The main contribution of this work is the construction of a simple and optimal $(n,d,\sigma)$-counter, with a matching lower bound. 

\begin{theorem}\label{thm:main}
The following statements are true.
\begin{description}
    \item[Upper bound] 
For any $n\geq 2$, $d\geq 1$, and $\sigma\in(0,1/3)$, there exists an $(n,d,\sigma)$-counter with space size $\log_2|\Omega|=\log_2\log_2n+O(d\log_2\sigma^{-1})$. 
    \item[Lower bound] 
For any $n\geq e^2$, $d\geq 1$, and $\sigma\in(0,1/16)$, if there is an $(n,d,\sigma)$-counter with state space $\Omega$, then $\log_2|\Omega|\geq \log_2\log_2n+\Omega(d\log_2\sigma^{-1})$. 
\end{description}
\end{theorem}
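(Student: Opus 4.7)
The argument splits into the upper and lower bounds; in both, the governing quantity is the covering/packing number of Euclidean shells in $\R^d$.

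For the upper bound, the plan is to build a codebook $C \subset \R^d$ of size $O(\log n \cdot \sigma^{-d})$ that covers every reachable count vector to within Euclidean distance $\sigma|x|$, together with an online update rule. I would partition the magnitude range into dyadic shells and place a $(\sigma 2^{i-1})$-net on each $A_i = \{x : 2^{i-1} \le |x| < 2^{i+1}\}$; by standard sphere covering each net has $O(\sigma^{-d})$ points, and there are $O(\log n)$ shells, so $\log_2 |C| = \log_2 \log_2 n + O(d \log_2 \sigma^{-1})$. The state is then an index into $C$, viewable as a pair (scale $v$, within-shell direction $y$)---the ``variable-length integer encoding'' perspective. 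For the online updates I would generalize the Morris coin-flip: on each increment, with probability $\sim 2^{-v}$ shift the direction $y$ toward the incremented basis vector, and when $y$ would leave the current shell, rerandomize via unbiased coordinate-wise halving and increment $v$. The main obstacle is controlling the variance contribution of the halvings across the $O(\log n)$ scale changes, which dictates a careful choice of net and rounding rule.

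For the lower bound, the plan is a packing argument. Given any $(n,d,\sigma)$-counter with state space $\Omega$, each count vector $x$ induces a distribution $\mu_x$ on $\Omega$, and Markov applied to $\E|\hat{x}-x|^2 < \sigma^2|x|^2$ produces a set $S(x) \subseteq \Omega$ with $\mu_x(S(x)) > 3/4$ on which the estimate lies in $B(x, 2\sigma|x|)$. A family of count vectors $\{x_\alpha\}$ with pairwise distances $|x_\alpha - x_\beta| > 4\sigma \max(|x_\alpha|, |x_\beta|)$ then yields disjoint (and nonempty) sets $S(x_\alpha)$ by the triangle inequality, so $|\Omega| \ge |\{x_\alpha\}|$.

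I would realize such a family by taking every other dyadic shell $A_i$ for even $i$ up to roughly $\log_2 n - O(\log d)$, giving $\Omega(\log n)$ shells, and within each $A_i$ packing $\Omega(\sigma^{-d})$ integer lattice points separated by at least $8\sigma \cdot 2^i$. The within-shell packing follows the volumetric estimate: the shell has volume $\Omega(2^{id})$ whereas each separation ball has volume $O((\sigma 2^i)^d)$, and the integer lattice realizes the continuous packing once $2^i$ exceeds the grid spacing $O(1/\sigma)$. Across nonadjacent shells the separation is automatic once $\sigma < 1/16$ since the magnitudes differ by at least a factor of four. The resulting family has size $\Omega(\log n \cdot \sigma^{-d})$, yielding $\log_2 |\Omega| \ge \log_2 \log_2 n + \Omega(d \log_2 \sigma^{-1})$. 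The main technical obstacle here is ensuring the integer lattice is dense enough at small shells (when $2^i \sim 1/\sigma$); this contributes at most a constant additive offset that is absorbed by the $\Omega(\cdot)$ in the final bound, since for large $d$ the $\Omega(d\log\sigma^{-1})$ term already dominates whatever is lost from the few smallest shells.
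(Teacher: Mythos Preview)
Your lower bound is essentially correct and runs the natural dual (packing) version of the paper's covering argument. The paper instead shows that any $(n,d,\sigma)$-counter induces a $4\sigma$-multiplicative covering of the shell $[\sqrt{d}\sigma^{-1},\,n/\sqrt{d}]\mathbb{S}_+^{d-1}$ and lower-bounds covering numbers layer by layer; your disjoint-$S(x)$ argument reaches the same conclusion. One small correction: because count vectors live in the positive orthant, the within-shell packing is $\Omega((C\sigma)^{-d})$ rather than $\Omega(\sigma^{-d})$, but with $\sigma<1/16$ this still contributes $\Omega(d\log_2\sigma^{-1})$ bits, so the final bound is unaffected.

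The upper bound, however, has a genuine gap. You correctly flag ``controlling the variance contribution of the halvings'' as the obstacle but do not resolve it, and a generic $\epsilon$-net on each shell does not obviously admit an unbiased online update rule with bounded variance. The paper's solution is \emph{not} a net in the usual sense: it stores $V\in\N^d$ under a variable-length code with a fixed \emph{total} symbol budget $M_*=\Theta(d\log\sigma^{-1})$, so that large coordinates may borrow bits from small ones. This specific structure buys two things your sketch lacks. First, unbiased updates are trivial: $V_j\gets V_j+1$ with probability $2^{-U}$, and coordinatewise probabilistic halving on scale-up. Second---and this is the crux of the variance analysis---after any scale-up the budget constraint together with concavity of $k\mapsto\log_2(1+k)$ forces $\sum_j V_j\ge ad$, which yields $\E(2^{2U}-1)\le \E|\hat x|^2/(ad)$ and closes the recursion $\E|\hat x|^2-|x|^2\le \tfrac{1}{2a}|x|^2+\tfrac{1}{3a}\E|\hat x|^2$. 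The paper also shows explicitly that the naive fixed-range scheme $V_j\in[0,a]$ (the most obvious instantiation of your ``direction $y$'') fails for $a=o(\sqrt{d})$ because its estimate set is not dense enough near vectors with a few large and many small coordinates; your proposal does not say what replaces it or why the replacement supports the variance bound.
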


A related data structure is the \emph{Count-Min sketch} \cite{cormode2005improved} by Cormode and Muthukrishnan. Given a count vector $x\in \N^d$ with $\sum_{j=1}^d x_j = n$, Count-Min is able to return an estimate $\widehat{x_j}$ of $x_j$, for any $j\in[d]$, such that $\widehat{x_j}\in[x_j,x_j+\epsilon n]$ with probability $1-\delta$, using only $O(\epsilon^{-1}\log\delta^{-1}\log n)$ bits of space. It may seem that Count-Min with the power of hash functions has a chance to beat the lower bound in Theorem \ref{thm:main} by combining the estimates for each coordinate. However, it is hopeless to use Count-Min as a $d$-dimensional counter unless $\epsilon=O(1/d)$, in which case it still lies above the lower bound in Theorem \ref{thm:main}. Indeed, suppose $\widehat{x_j}\in[x_j,x_j+\epsilon n]$ holds for all $j$. The vector estimate $\hat{x}=(\widehat{x_1},\ldots,\widehat{x_d})$ has error
\begin{align*}
    |\hat{x}-x|^2 \leq d\epsilon^2 n^2 \leq d^2\epsilon^2 |x|^2,
\end{align*}
where we use the bound $|x|^2\geq (\sum_{j=1}^d x_j)^2/d=n^2/d$. Therefore $\epsilon$ has to be $O(1/d)$  for $\hat{x}$ to be a multiplicative estimate of $x$. 

\subsection{Technical Overview}
\subsubsection{Upper Bound} 
The upper bound in Theorem \ref{thm:main} uses the natural idea of maintaining a common \emph{scale counter} $U$ for all $d$ dimensions and tracking the relative magnitude of each coordinate with a vector $V\in \N^d$. The estimate $\hat{x}$ is equal to $2^U V$. For demonstration, we will first describe a very simple but sub-optimal solution and then discuss how to modify it to reach optimal space-accuracy tradeoff. The transition of the states is designed similarly with the design of the Morris counter, in which the estimate is maintained \emph{unbiased} at any moment ($\E 2^U V =x$). Initially we set $U=0$ and $V=(0,\ldots,0)$.
\begin{itemize}
    \item When the $j$th coordinate is incremented (i.e., $x_j\gets x_j+1$), we update $V_j\gets V_j+1$ with probability $2^{-U}$, ensuring the unbiasedness of the estimate.
    \item Each coordinate $V_j$ is restricted to the range $[0,a]$, so that the space to store $V$ is $d\log_2(1+ a)$. Whenever some $V_j$ exceeds $a$, we have to \emph{scale up} to keep $V_j$ in range. Again, we want to maintain the property that $\E 2^UV= x$. Thus, the scale-up should be done as follows.
    \begin{itemize}
        \item $U\gets U+1$. The scale counter is increased by 1.
        \item For the estimate to be unbiased, $V_j$ should become $V_j/2$ for every $j$. Here we also want $V_j$ to always be an integer. Therefore, if $V_j$ is even, then $V_j\gets V_j/2$. If $V_j$ is odd, then $V_j\gets (V_j+\xi)/2$ where $\xi\in\{-1,1\}$ is a freshly sampled Rademacher random variable. 
    \end{itemize}
\end{itemize}
The algorithm above is unbiased by design. However, the memory-accuracy tradeoff is \emph{not} optimal. There is a fundamental problem of the design above that the parameter $a$ has to be $\Omega(\sqrt{d})$ (with $\sigma=O(1)$) for it to be a $d$-dimensional counter. Instead of analyzing the algorithm in details, this problem can be identified by simply checking the set of \emph{all possible estimates} that are produced by the algorithm:
\begin{align*}
    \mathcal{D}=\{2^U (V_1,\ldots,V_d): U\in \N ,\forall j\in[d], V_j\in[0,a]\}.
\end{align*}
It turns out that $ \mathcal{D}$ is not \emph{dense enough} for estimating some input $x$. Indeed, we choose 
\begin{align*}
x=2^s(\underbrace{a,\ldots,a}_{r},\underbrace{1/2,\ldots,1/2}_{d-r}),
\end{align*}
for some $r\in[0,d]$ so that both $y_1=2^s(a,\ldots,a,1,\ldots,1)$ and $y_2=2^{s-1}(a,\ldots,a,1,\ldots,1)$ are far from $x$.  By construction, the nearest estimate of $x$ in $\mathcal{D}$ has to be either $y_1$ or $y_2$.\footnote{Note that $y_1$ is one possible probabilistic rounding of $x$. Other roundings $\{2^s(a,\ldots,a,z_1,\ldots,z_{d-r})\mid z_j\in\{0,1\}\}$ all have the same distance to $x$. By symmetry, we only need to consider $y_1$. The vector $y_2$ is produced by decreasing the scale to $2^{s-1}$, so the smaller entries can be represented precisely with the cost of underestimating the large entries. }  We choose $r$ so that the two choices have equal distances to $x$. A simple calculation shows that we need to set
$r=d/(1+4a^2)$.\footnote{For simplicity, we only require $r>1$ and omit the details of rounding $r$ to an integer.} Assume now $a<\sqrt{d}/3$ and then $r>1$. Thus $\min_{y\in \mathcal{D}}|x-y|^2 = 2^{2s}\frac{da^2}{1+4a^2}$, while $|x|^2 = 2^{2s}\cdot 2a^2d/(1+4a^2)$. Therefore, if the algorithm is an $(n,d,\sigma)$-counter, then $\E |\hat{x}-x|^2<\sigma^2|x|^2$, which means there is at least one $x_*\in \mathcal{D}$ such that $|x_*-x|^2<\sigma^2|x|^2$. This suggests that $2^{2s}a^2d/(1+4a^2)< \sigma^2 2^{2s}\cdot 2a^2d/(1+4a^2)$ and therefore $\sigma>1/\sqrt{2}$. In other words, if $a<\sqrt{d}/3$, then any $(n,d,\sigma)$-counter must have $\sigma>1/\sqrt{2}$. We conclude that $a$ needs to be $\Omega(\sqrt{d})$ when $\sigma\leq 1/\sqrt{2}$, indicating an additional $O(d\log d)$ space overhead in comparison to the optimal space.

From the discussion above, we see that one has to have a dense enough set of estimates $\mathcal{D}$ for a correct $(n,d,\sigma)$-counter. 
Roughly speaking, one wants to spend more bits on large coordinates and fewer bits on small coordinates. This can be done with a simple and natural algorithmic idea: \emph{variable-length integer encoding}. The usual binary encoding of non-negative integers automatically uses more bits on large numbers and fewer bits on small numbers. We will prove the following set of estimates is dense enough.
\begin{align*}
    \mathcal{D}_{\text{compressed}}=\{2^U (V_1,\ldots,V_d): U\in \N, \text{the encoding length of $V$ is at most $O(d\log_2 \sigma^{-1})$}\},
\end{align*}
This leads to our upper bound in Theorem \ref{thm:main}. One may observe how this encoding idea better handles the previous counterexample $x=2^s(a,\ldots,a,1/2,\ldots,1/2)$. An estimate of $2^{s-1}(2a,\ldots,2a,1,\ldots,1)$ can now be returned because though each entry with value $2a$ needs above average space to encode, each entry with value $1$ uses less than average space to encode, \emph{saving the memory space for the large entries}. 
% Overall, if one compares $\mathcal{D}$ and $\mathcal{D}_{\text{compressed}}$ with the same space budget for the vector $V$, one notes that though, by construction, the total number of points are the same (say, for a bounded $U$), the points in $\mathcal{D}_{\text{compressed}}$ are better aligned with the task of approximating $x$ with relative Euclidean error. 

\subsubsection{Lower Bound}\label{sec:lower_intro}
As we have discussed above, the set of all estimates produced by the counter has to be dense enough for the task of $d$-dimensional counting to be possible. 
The lower bound in Theorem \ref{thm:main} thus arises from estimating the size of \emph{multiplicative space coverings}.

\begin{definition}[multiplicative space covering]\label{def:cover}
For any subsets $A,R\subset \R^d$ and $\sigma>0$, we say $R$ is a \emph{$\sigma$-multiplicative covering} of $A$ if for any $x\in A$ there exists $y\in R$ such that $|x-y|<\sigma |x|$.
\end{definition}
Clearly, the set of all estimates of an $(n,d,\sigma)$-counter will form a $\sigma$-multiplicative space covering of $\mathcal{X}=\{x\in\N^d\mid x_1+\cdots +x_d \leq n\}$. The lower bound is thus proved by estimating the covering size of $\mathcal{X}$, which can be further reduced to the classic problem of covering spheres with smaller spheres \cite{rogers1963covering}.

\subsection{Related Work}
There have been many variants and analysis of Morris counters before \cite{kruskal1991flexible,LOUCHARD2008109,gronemeier2009applying,fuchs2012approximate,nelson2022optimal}. Nevertheless, they have only considered the one-dimensional case. Tracking a multi-dimensional count vector (with a vector norm) is a new problem to the best of our knowledge. 

In the one-dimensional case ($d=1$), our algorithm is equivalent to the variant---\emph{floating-point counter}---by Cs{\H{u}}r{\"o}s \cite{csHuros2010approximate}, except for that the floating-point counter restricts $a$ to be a power of two so no probabilistic rounding is needed when halving. Nevertheless, when $d\geq 2$, probabilistic rounding is inevitable. In comparison to the classic Morris counter, the floating-point variant is much more convenient to implement on binary machines, where only simple integer operations and bit operations are needed \cite{csHuros2010approximate}. Another advantage is that the number of random bits needed per increment is at most 2 in expectation, since the probability $2^{-U}$ can be simulated by generating random bits sequentially, looking for $U$ consecutive ones \cite{csHuros2010approximate} (stop generating random bits when a zero shows up). Such advantages are inherited by our $d$-dimensional counter as well.

\subsection{Organization}
We will construct the upper bound in \S \ref{sec:upper} and prove the lower bound in \S \ref{sec:lower}. The main theorem (Theorem \ref{thm:main}) follows directly from Corollary \ref{cor:upper_bound} and \ref{cor:lower_bound}. We have included a sample run in Appendix \ref{sec:sample_runs} to help illustrate the patterns of the new algorithm. 

\section{Upper Bound: Variable-length Integer Encoding}\label{sec:upper}
We consider a specific integer encoding which is carefully designed to simplify the analysis later.
\begin{definition}[variable-length integer encoding]\label{def:code}
Let the symbol set be $\{0,1,|\}$ where $|$ is used as a separator. We encode $0$ as $|$, $1$ as $0|$. For $k\geq 2$, 
we encode $k$ as $[k-1]_2|$, where $[k-1]_2$ is the binary representation of $k-1$. See the following table. 
\begin{center}
\begin{tabular}{c|c|c|c}
   integer  & code & integer & code \\
   \hline
    $0$ &  $|$ & $4$ &  $11|$   \\
    $1$ & $0|$ & $5$ &  $100|$ \\
    $2$ & $1|$ & $6$ &  $101|$  \\
    $3$ & $10|$ & $7$ & $110|$
\end{tabular}
\end{center}
For any $k\in \N$, we define $\psi(k)$ to be the length of its encoding above. An integer vector $V\in\N^d$ is encoded by concatenating the codes for each of its component. For example $V=(3,0,4,0,1)$ is encoded as $10||11||0|$. The code length of the vector $V$ is denoted by $\psi(V)=\sum_{j=1}^d \psi(V_j)$.
\end{definition}
\begin{remark}
For clarity, the code here uses an alphabet of size $3$, including a separator. If it is encoded with bits, then the length will increase by a $\log_2 3$ factor. See \cite{patrascu2008succincter,dodis2010changing} for techniques that encode $m_*$ $\{0,1,|\}$-symbols using $\ceil{m_*\log_23}$ bits with constant querying time for each entry. 
\end{remark}

We now formally describe the new algorithm.
 The algorithm stores a \emph{scale counter} $U\in \N$ and a \emph{relative vector} $V=(V_1,\ldots,V_d)\in \N^d$. The estimator is $\hat{x}=2^UV=(2^UV_1,\ldots,2^UV_d)$. As discussed in the introduction, we will use the variable-length encoding in Definition \ref{def:code} to store the vector $V$. Let $M_*$ be the memory budget for $V$. Recall that $\psi(V)$ is the encoding length of $V$.

\medskip
\centerline{\framebox{
\begin{minipage}{5.5in}
\begin{description}
    \item[\texttt{Initialization}] $(V_1,\ldots,V_d)\gets(0,\ldots,0)$ and $U\gets 0$.
    \item[\texttt{Increment}$(j)$] With probability $2^{-U}$, $V_j\gets V_j+1$. If $\psi(V)>M_*$, then execute \texttt{Scale-up}.
    \item[\texttt{Scale-up}] $U\gets U+1$ and for $k\in \{1,\ldots,d\}$,
        \begin{itemize}
            \item if $V_k$ is even, then $V_k \gets V_k/2$;
            \item if $V_k$ is odd, then $V_k\gets (V_k+\xi)/2$ where $\xi\in\{-1,1\}$ is a freshly sampled Rademacher random variable.
        \end{itemize}
    \item[\texttt{Query}] Return $2^U(V_1,\ldots,V_d)$.
\end{description}
\end{minipage}
}}
\medskip

Since the algorithm will execute \texttt{Scale-up} whenever the space $\psi(V)$ of storing $V$ exceeds the budget $M_*$, the behavior of the algorithm will depend on the way that $V$ is encoded. The codes in Definition \ref{def:code} are constructed so that the space $\psi(V)$ increases by at most one during \texttt{Increment} and decreases by at least one during \texttt{Scale-up}. This guarantees that the memory space to encode $V$ is always bounded by $M_*$. We now list and prove all the properties of the encoding in Definition \ref{def:code} that we will use in the analysis. 
\begin{lemma}\label{lem:encoding}
For any $V\in \N^d$, the following statements are true.
\begin{itemize}
    \item $\psi(V)\leq  2d + \sum_{j=1}^d\log_2(1+V_j)$. (Space bound.)
    \item For any $j\in[d]$, $\psi(V+e_j) \leq \psi(V)+1$. (An increment increases the space by at most 1.)
    \item $\psi(\floor{V/2}) \geq \psi(V)-2d$. (Halving the relative vector will decrease the space by at most 2d. ) 
    \item If there exists $j\in[d]$ such that $V_j\geq 3$, then $ \psi(\ceil{V/2}) \leq \psi(V)-1$. (The space will decrease by at least one if not all $V_j$s are small.)
\end{itemize}
\end{lemma}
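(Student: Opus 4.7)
The plan is to reduce each of the four bullets to per-coordinate statements about the single-variable function $\psi:\N\to\N_{\geq 1}$ and then sum over $j\in[d]$. From Definition \ref{def:code} one reads off the closed form $\psi(0)=1$, $\psi(1)=2$, and $\psi(k)=\lfloor\log_2(k-1)\rfloor+2$ for $k\geq 2$; in particular $\psi$ is monotone non-decreasing. All four bullets then become elementary consequences of this formula, combined with standard manipulations of $\lfloor\log_2(\cdot)\rfloor$ and a few small-case checks.

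The first two bullets are short. The space bound follows from $\psi(V_j)\leq 2+\log_2(1+V_j)$ per coordinate: the cases $V_j\in\{0,1\}$ are immediate, and for $V_j\geq 2$ this follows from $\lfloor\log_2(V_j-1)\rfloor\leq \log_2(V_j+1)$. The increment bound reduces to $\psi(V_j+1)\leq\psi(V_j)+1$ since only one entry changes; this is clear for $V_j\in\{0,1\}$ and for $V_j\geq 2$ it is the standard fact $\lfloor\log_2 V_j\rfloor-\lfloor\log_2(V_j-1)\rfloor\in\{0,1\}$. The third bullet is the per-coordinate lower bound $\psi(\lfloor V_j/2\rfloor)\geq \psi(V_j)-2$: the values $V_j\leq 3$ are checked by hand, and for $V_j\geq 4$ the bound $\lfloor V_j/2\rfloor-1\geq (V_j-1)/4$ gives $\lfloor\log_2(\lfloor V_j/2\rfloor-1)\rfloor\geq \lfloor\log_2(V_j-1)\rfloor-2$. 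Summing over $j$ yields the $2d$ slack.

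The fourth bullet is where I expect the main work, because a single large coordinate must carry the whole decrease while the remaining small coordinates cannot be allowed to undo it. The plan is to split the argument into two per-coordinate claims: (i) $\psi(\lceil V_k/2\rceil)\leq \psi(V_k)$ for every $V_k\in\N$, proved by direct inspection on $V_k\in\{0,1,2\}$ and by the stronger (ii) below for $V_k\geq 3$; and (ii) $\psi(\lceil V_j/2\rceil)\leq \psi(V_j)-1$ whenever $V_j\geq 3$. For (ii), the case $V_j=3$ is immediate ($\psi(3)-\psi(2)=1$), and for $V_j\geq 4$ the key inequality $\lceil V_j/2\rceil-1\leq (V_j-1)/2$ (a parity split on $V_j$) combined with $\lfloor\log_2(x/2)\rfloor=\lfloor\log_2 x\rfloor-1$ gives a drop of one symbol in the encoding. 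Summing (i) over all coordinates and applying (ii) once at a coordinate with $V_j\geq 3$ yields the stated strict decrease. The only genuinely delicate step is the interaction of the ceiling with the floor of the logarithm in (ii), which is the reason the argument is set up to compare against $(V_j-1)/2$ rather than $V_j/2$.
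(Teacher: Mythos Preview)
Your proposal is correct and follows essentially the same approach as the paper: both proofs reduce to per-coordinate statements about the closed form $\psi(0)=1$, $\psi(1)=2$, $\psi(k)=2+\lfloor\log_2(k-1)\rfloor$ for $k\geq 2$, and then sum. The only differences are cosmetic---the paper checks $k\leq 4$ by hand in the third bullet (you check $k\leq 3$), and in the fourth bullet the paper handles the special coordinate by an explicit even/odd parity split while you package the same computation into the single inequality $\lceil V_j/2\rceil-1\leq (V_j-1)/2$; for the remaining coordinates the paper invokes monotonicity of $\psi$ together with $\lceil V_k/2\rceil\leq V_k$, which is exactly your claim (i).
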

\begin{remark}
    $\floor{V/2}$ and $\ceil{V/2}$ are evaluated entry-wisely.
\end{remark}
\begin{proof}
Recall that for the usual binary encoding, an integer $k$ has length $1+\floor{\log_2k}$ if $k\geq 1$. Thus, by Definition \ref{def:code}, if $k\geq 2$, then $ 2+\floor{\log_2 (k-1)}$ symbols are needed to store the code $[k-1]_2|$. It requires one symbol to encode 0 and two symbols to encode 1. We bound the length for encoding $k$ by $2+\log_2(1+k)$, which holds for all $k\in \N$. We then have $\psi(V)=\sum_{j=1}^d \psi(V_j) \leq 2d+\sum_{j=1}^d \log_2(1+V_j)$. 

The second statement is true by construction.

For the third statement, it suffices to prove that $\psi(\floor{k/2})\geq \psi(k)-2$ for any $k\in \N$. One may check that this is true for $k=1,2,3,4$. Now assume $k\geq 5$. Note that $\psi(\floor{k/2})-\psi(k)=\floor{\log_2(\floor{k/2}-1)}-\floor{\log_2(k-1)}=-2+\floor{\log_2(4\floor{k/2}-4)}-\floor{\log_2(k-1)}$. It suffices to prove that $4\floor{k/2}-4\geq k-1$. Indeed, $4\floor{k/2}-4\geq 2k-6\geq k-1$ since $k\geq 5$. 

For the fourth statement, since $\norm{V}_\infty \geq 3$, there exists $j\in [d]$, such that $V_j\geq 3$. If $V_j$ is even, then $V_j\geq 4$ and $\psi(V_j/2)=2+\floor{\log_2(V_j/2-1)}= 1+\floor{\log_2(V_j -2)}\leq \psi(V_j)-1$. If $V_j$ is odd, then $\psi(\ceil{V_j/2})=\psi((V_j+1)/2)= 2+\floor{\log_2((V_j+1)/2-1)}= 1+\floor{\log_2(V_j-1)}=\psi(V_j)-1$. Note that $\psi$ is non-decreasing and thus we conclude that $\psi(\ceil{V/2}) \leq \psi(V)-1$.
\end{proof}

Next, we analyze the variance of the algorithm. Similar to the analysis of the Morris counter \cite{morris1978counting}, the basic idea is to analyze the \emph{change in variance} caused by each \texttt{Increment}, including a possible \texttt{Scale-up} step, and then compute the final variance by summing up the changes at time $k=1,\ldots,n$. 
Recall that
for any $r\in [d]$, $e_r$ is the $r$th standard base vector of $\R^d$. 
\begin{theorem}\label{thm:mean_var}
Fix any input sequence $(w_1,\ldots,w_n)\in [d]^n$.
Let $(U,V)$ be the final state of the algorithm and $x=\sum_{j=1}^n e_{w_j}$ be the final count vector. Define $\hat{x}=2^UV$. If the memory budget is $M_*=4d+2d \log_2(1+a)$ with $a\geq 1$, then 
\begin{align*}
\E \hat{x} & = x,\\
\E |\hat{x}-x|^2 & \leq \frac{5}{6a-2}|x|^2.
\end{align*}
Furthermore for any $r\geq 1$, 
\begin{align*}
   \pr\left(U \geq r+\log_2\left(\frac{n}{ad}+1\right)\right) &\leq 2^{-r}.
\end{align*}
\end{theorem}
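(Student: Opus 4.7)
I verify, one sub-step at a time, that $\E[2^{U_t}V_{t,k}\mid\mathcal F_{t-1}]=2^{U_{t-1}}V_{t-1,k}+\mathbbm{1}[w_t=k]$: in \texttt{Increment} the conditional expected change on coordinate $w_t$ is $2^U\cdot 2^{-U}=1$, and in \texttt{Scale-up} the Rademacher $\xi$ enforces $\E[V_k'\mid V_k]=V_k/2$ for both parities, hence $\E[2^{U+1}V_k'\mid\cdot]=2^UV_k$. This gives $\E\hat x=x$; summing over $k$ also shows that $M_t:=2^{U_t}|V_t|_1-t$ is a mean-zero martingale, which drives the other two bounds.

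\textbf{Tail on $U$.} Let $\tau_k$ be the time of the $k$th \texttt{Scale-up}. I optional-stop $M_t$ at the bounded time $T:=\tau_k\wedge n$, getting $\E[2^{U_T}|V_T|_1]=\E T\le n$. On $\{\tau_k\le n\}$, \Cref{lem:encoding} combined with AM--GM applied to $\prod_j(1+V_j)$ at the \texttt{Scale-up} trigger $\psi(V)>M_*$ forces $|V|_1>4d(1+a)^2-d$ pre-halving, and hence $|V_T|_1\ge L_a:=2d(1+a)^2-d\ge ad$ post-halving. Therefore $ad\cdot 2^k\,\pr(\tau_k\le n)\le n$, and taking $k=\lceil r+\log_2(n/(ad)+1)\rceil$ (legal since $U_n\in\mathbb{N}$) yields the claimed $2^{-r}$ tail.

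\textbf{Variance (with the main obstacle).} Let $S_t$ be the indicator of a \texttt{Scale-up} at step $t$. Orthogonality of the martingale increments of $\hat x-x$ gives $\E|\hat x_n-x_n|^2=\sum_t\E|\Delta_t|^2$, and within each step $\Delta_t=\Delta_t^{\mathrm{inc}}+S_t\,\Delta_t^{\mathrm{sc}}$ has vanishing cross term by conditional unbiasedness of \texttt{Scale-up} given the post-\texttt{Increment} state. A direct second-moment computation then produces
\[
\E|\hat x_n-x_n|^2\;\le\;\sum_{t=1}^n\E\bigl[2^{U_{t-1}}-1\bigr]\;+\;d\sum_{t=1}^n\E\bigl[4^{U_{t-1}}S_t\bigr],
\]
the factor $d$ coming from the number of odd-parity coordinates at a \texttt{Scale-up} being at most $d$. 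The \texttt{Increment} sum is bounded by $n(n-1)/(2L_a)$ via the invariant $|V_t|_1\ge L_a$ on $\{U_t\ge 1\}$ together with the martingale identity, which gives $\E 2^{U_{t-1}}\le 1+(t-1)/L_a$. The \texttt{Scale-up} sum is the delicate point and the main obstacle: the tail bound above does \emph{not} suffice, since $\sum_k 4^{k-1}\pr(\tau_k\le n)$ would diverge under only $2^{-r}$ decay. I therefore bypass $\pr(\tau_k\le n)$ entirely, using the telescoping identity $\sum_t\E[4^{U_{t-1}}S_t]=(\E 4^{U_n}-1)/3$ (each \texttt{Scale-up} multiplies $4^U$ by $4$) combined with the \emph{squared} invariant
\[
4^{U_n}\le |\hat x_n|_1^2/L_a^2\le d|\hat x_n|_2^2/L_a^2\quad\text{on }\{U_n\ge 1\},
\]
(Cauchy--Schwarz in the second step), which yields $\E 4^{U_n}\le 1+d(|x|^2+\mathrm{Var})/L_a^2$. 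Assembling these ingredients produces a self-referential linear inequality
\[
\mathrm{Var}\,\Bigl(1-\tfrac{d^2}{3L_a^2}\Bigr)\le \tfrac{n(n-1)}{2L_a}+\tfrac{d^2|x|^2}{3L_a^2},
\]
which I solve using $n^2\le d|x|^2$ (Cauchy--Schwarz on the deterministic count vector) and $L_a=d(2a^2+4a+1)$ to obtain a bound of order $|x|^2/(2a^2+4a+1)$, comfortably inside $5|x|^2/(6a-2)$ for all $a\ge 1$; the final algebraic comparison is routine.
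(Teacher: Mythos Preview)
Your argument is correct and follows the same skeleton as the paper's proof: the same two-term variance decomposition (increment noise plus \texttt{Scale-up} noise), the same telescoping identity $\sum_t 4^{U_{t-1}}S_t=(4^{U_n}-1)/3$, an $\ell_1$-type invariant on $V$ whenever $U\ge 1$, and the same self-referential inequality for the variance. The tail bound also matches: your optional-stopping formulation of $\E 2^{U_T}|V_T|_1=\E T$ is just a repackaging of the paper's direct identity $\E 2^U|V|_1=k$ followed by Markov.

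The only substantive differences are in bookkeeping. First, you work with the constant $L_a=d(2a^2+4a+1)$ coming from the stated budget $M_*=4d+2d\log_2(1+a)$ via AM--GM on $\prod_j(1+V_j)$, whereas the paper records only the weaker $|V|_1\ge ad$ (and in fact uses the smaller $M_*=4d+d\log_2(1+a)$ in its lemma); your bound is therefore tighter than needed, which is why the final algebraic comparison with $5/(6a-2)$ is slack. Second, for controlling $\E 4^{U_n}$ the paper exploits the integer trick $|V|_2^2\ge |V|_1\ge ad$ to get $\E(4^{U_n}-1)\le \E|\hat x|_2^2/(ad)$ in one step, while you square the $\ell_1$ invariant and pass through Cauchy--Schwarz to reach $|\hat x|_2^2$; both routes land on a linear inequality in $\mathrm{Var}$. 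These are stylistic rather than structural differences.
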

\begin{proof}
$\E \hat{x} = x$ is true by construction, since both \texttt{Increment} and \texttt{Scale-up} are designed to keep $\hat{x}=2^UV$ unbiased. The challenge is to analyze the variance $\E |\hat{x}-x|^2$. We start by proving the following lemma that bounds the final variance by the distributional properties of the scale counter $U$ at time $k=1,\ldots,n$. 
\begin{lemma}[variance decomposition]\label{lem:var_decomp}
For any $k\leq n$, let $(U^{(k)},V^{(k)})$ be the memory state at time $k$ (i.e., after seeing $w_1,\ldots,w_k$) and $x^{(k)}=\sum_{j=1}^k e_{w_j}$ be the count vector at time $k$. Let $\hat{x}^{(k)}=2^{U^{(k)}}V^{(k)}$ to be the estimate at time $k$. Then
\begin{align*}
     \E|\hat{x}^{(n)}|^2-|x^{(n)}|^2 &\leq \sum_{k=0}^{n-1}\E(2^{U^{(k)}}-1)+\frac{d}{3}\E (2^{2U^{(n)}}-1).
\end{align*}
\end{lemma}
\begin{proof}
     We define $Y^{(k)}$ to be the intermediate vector after \texttt{Increment} before \texttt{Scale-up} at time $k$. By construction, we know $Y^{(k)}=V^{(k)}+e_{w_{k+1}}$ with probability $2^{-U^{(k)}}$; otherwise $Y^{(k)}=V^{(k)}$. For simplicity, we now fix $k\geq 0$ and write $U^{(k)},V^{(k)},x^{(k)},w_{k+1},Y^{(k)}$ as $U,V,x,w,Y$ and $U^{(k+1)},V^{(k+1)},x^{(k+1)}$ as $U',V',x'$ respectively. By the definition of \texttt{Increment}, we have
\begin{align}
    \E (|2^{U}Y|^2-|2^{U}V|^2 \mid U,V) &=2^{-U}\cdot 2^{2U}(2V_{w}+1)=2^U(2V_w+1).\label{eq:phase1}
\end{align}
Now if $\psi(Y)\leq M_*$, then we have $(U',V')=(U,Y)$ where no \texttt{Scale-up} is executed. If $\psi(Y)> M_*$, then we must have $\psi(Y)=M_*+1$ because by Lemma \ref{lem:encoding}, $\psi(Y)\leq \psi(V+e_{w})\leq \psi(V)+1$ and $\psi(V)\leq M_*$. By construction, $M_*\geq 3d$ and thus there exists $j\in[d]$ such that $Y_j\geq 3$. By Lemma \ref{lem:encoding},  we know $\psi(\ceil{Y/2})\leq \psi(Y)-1=M_*$. Therefore, after the \texttt{Scale-up} step, the space will be at most $M_*$. When \texttt{Scale-up} is executed, $U'=U+1$ and $V'=Y/2+\sum_{k=1}^d \xi_k \ind{2\nmid Y_k}$. Here, $\xi_1,\ldots,\xi_d$ are i.i.d.~Rademacher random variables. Thus we have
\begin{align}
    &\E (|2^{U'}V'|^2-|2^UY|^2\mid U,Y) \nonumber\\
    &=\ind{\psi(Y)>M_*}\sum_{k=1}^d \E \left[\left(2^{2U+2}\left(\frac{Y_k+\xi_k\ind{2\nmid Y_k}}{2}\right)^2 - 2^{2U}Y_k^2 \right)\;\middle|\; U,Y\right],\nonumber
    \intertext{note that $\E \xi_k=0$ and $\xi_k^2=1$}
    &=\ind{\psi(Y)>M_*}\sum_{k=1}^d \left(2^{2U}(Y_k^2+\ind{2\nmid Y_k}) - 2^{2U}Y_k^2 \right)\nonumber\\
    &=\ind{\psi(Y)>M_*}2^{2U}\sum_{k=1}^d \ind{2\nmid Y_k}\nonumber\\
    &\leq \ind{\psi(Y)>M_*}2^{2U}\cdot d.\label{eq:phase2}
\end{align}
Take the total expectation and by Equation (\ref{eq:phase1}) and (\ref{eq:phase2}) we have,
\begin{align}
    \E(|2^{U'}V'|^2-|2^{U}V|^2) &=  \E \E(|2^{U}Y|^2-|2^{U}V|^2\mid U,V)+\E\E(|2^{U'}V'|^2-|2^{U}Y|^2\mid U,Y)\nonumber\\    
    &\leq \E (2^U(2V_w+1))+\E(\ind{\psi(Y)>M_*}2^{2U}d).\label{eq:phase3}
\end{align}
Recall that $|x'|^2-|x|^2 = (x_w+1)^2-x_w^2= 2x_{w}+1$ and $\E 2^U V_w = x_w$. This implies $\E 2^U V_w = (|x'|^2-|x|^2-1)/2$. Thus by Equation (\ref{eq:phase3}) we have,
\begin{align}
    \E(|2^{U'}V'|^2-|2^{U}V|^2) &\leq  |x'|^2-|x|^2+\E (2^U-1) +\E( d 2^{2U} \ind{\psi(Y)>M_*}).\label{eq:phase4}
\end{align}
Sum up the differences and we have
\begin{align*}    
\E|2^{U^{(n)}}V^{(n)}|^2
    &=\sum_{k=0}^{n-1}\E(|2^{U^{(k+1)}}V^{(k+1)}|^2-|2^{U^{(k)}}V^{(k)}|^2)
    \intertext{Recall that $U',V',U,V$ are short for $U^{(k+1)},V^{(k+1)},U^{(k)},V^{(k)}$. By Equation (\ref{eq:phase4}),}
    &\leq \sum_{k=0}^{n-1}\left(|x^{(k+1)}|^2-|x^{(k)}|^2+\E(2^{U^{(k)}}-1)+\E \left(d2^{2U^{(k)}}\ind{\psi(Y^{(k)})>M_*}\right) \right)\\
    &=|x^{(n)}|^2+\sum_{k=0}^{n-1}\E(2^{U^{(k)}}-1) + d\E\left[\sum_{k=0}^{n-1}2^{2U^{(k)}}\ind{\psi(Y^{(k)})>M_*}\right].
\end{align*}
Note that $\ind{\psi(Y^{(k)})>M_*}$ is the indicator whether \texttt{Scale-up} is executed when processing $w_{k+1}$. Therefore 
\begin{align*}
    \sum_{k=0}^{n-1}2^{2U^{(k)}}\ind{\psi(Y^{(k)})>M_*}&=\sum_{j=0}^{U^{(n)}-1} 2^{2j} = \frac{4^{U^{(n)}}-1}{4-1}= \frac{2^{2U^{(n)}}-1}{3}
\end{align*}
We conclude that
\begin{align*}
    \E |2^{U^{(n)}}V^{(n)}|^2-|x^{(n)}|^2&\leq \sum_{k=0}^{n-1}\E(2^{U^{(k)}}-1)+\frac{d}{3}\E (2^{2U^{(n)}}-1),
\end{align*}
which gives the stated variance decomposition.
\end{proof}

The next step is to estimate $\E (2^{U}-1)$ and $\E (2^{2U}-1)$. The idea is to relate them with $\E 2^{U}V=x$ and $\E |2^UV|^2=\E |\hat{x}|^2$. 
\begin{lemma}
If $U\geq 1$, then $|V|^2 \geq \sum_{j=1}^d V_j \geq ad$.
\end{lemma}
\begin{proof}
    Since $V_j$s are integer, we automatically have $|V|^2=\sum_{j=1}^d V_j^2\geq \sum_{j=1}^d V_j$. Thus it suffices to prove $\sum_{j=1}^d V_j\geq ad$. Note that if $U\geq 1$, then the algorithm has executed at least one \texttt{Scale-up}. Note that by Lemma \ref{lem:encoding}, the space $\psi(V)$ increases at most by one per \texttt{Increment} and decreases at most by $2d$ per \texttt{Scale-up}. Therefore, after the first \texttt{Scale-up}, the memory usage $\psi(V)$ is between $M_*+1-2d$ and $M_*$. Note that $M_*$ is set to $4d+d\log_2(1+a)$. Thus we have 
    \begin{align*}
        \psi(V)&\geq M_*+1-2d \geq 2d +d\log_2(1+a) .
        \intertext{
    On the other hand, by Lemma \ref{lem:encoding} (encoding space bound), we have}
        \psi(V)&\leq 2d+ \sum_{j=1}^d\log_2(1+V_j).
    \end{align*}
    The bounds above imply that $
        \sum_{j=1}^d\log_2(1+V_j)\geq d\log_2(1+a)$. Since $\log_2(1+r)$ is convex in $r$, we conclude that
        \begin{align*}
            d\log_2(1+a)&\leq \sum_{j=1}^d\log_2(1+V_j) \leq d\log_2\left(1+\sum_{j=1}^dV_j/d\right),
        \end{align*}
        and thus $\sum_{j=1}^dV_j\geq ad$.
\end{proof}
Note that if $U=0$, then $2^U-1=0$.
With the lemma above, we now have
\begin{align*}
    ad\E (2^U-1) \leq \E (2^U-1)\sum_{j=1}^dV_j  \leq \E 2^U\sum_{j=1}^dV_j =\sum_{j=1}^dx_j=k,
\end{align*}
where $k$ is the current time, i.e., the total number of increments so far. This implies $\E (2^U-1)\leq k/(ad)$. Similarly, we have
\begin{align*}
    ad\E (2^{2U}-1) \leq \E (2^{2U}-1)\sum_{j=1}^dV_j^2 \leq \E 2^{2U}\sum_{j=1}^dV_j^2 =\E |\hat{x}|^2,
\end{align*}
which implies $\E (2^{2U}-1)\leq \E |\hat{x}|^2/(ad)$.  Inserting the estimates back into Lemma \ref{lem:var_decomp}, we have
\begin{align*}
    \E |\hat{x}^{(n)}|^2-|x^{(n)}|^2&\leq \sum_{k=0}^{n-1}\E(2^{U^{(k)}}-1)+\frac{d}{3}\E (2^{2U^{(n)}}-1)\\
    &\leq \sum_{k=0}^{n-1} k/(ad) + \frac{d}{3}\E |\hat{x}^{(n)}|^2/(ad)
    \intertext{Note that $\sum_{k=0}^{n-1} k\leq n^2/2\leq d|x^{(n)}|^2/2$ since $\sum_{j=1}^d x^{(n)}_j=n$ and $d|x^{(n)}|^2 \geq (\sum_{j=1}^d x^{(n)}_j)^2$. }
    &\leq \frac{1}{2a}|x^{(n)}|^2 + \frac{1}{3a}\E |\hat{x}^{(n)}|^2.
\end{align*}
Reorganizing the inequality above, we have
\begin{align*}
    \E|\hat{x}^{(n)}|^2-|x^{(n)}|^2 &\leq \frac{5}{6a-2}|x^{(n)}|^2.
\end{align*}
Note that since the estimate is unbiased, we have $\E|\hat{x}^{(n)}|^2-|x^{(n)}|^2=\E|\hat{x}^{(n)}-x^{(n)}|^2 $. Thus we conclude that $\E|\hat{x}^{(n)}-x^{(n)}|^2\leq \frac{5}{6a-2}|x^{(n)}|^2$. 

Finally we estimate the tail of $U$. We already know that  $\E(2^{U^{(k)}}-1)\leq k/(ad)$ and thus $\E 2^{U^{(k)}}\leq 1+k/(ad)$. Then for any $z>0$, by Markov,
\begin{align*}
    \pr(2^{U^{(k)}} \geq z) &\leq \frac{\E 2^{U^{(k)}}}{z} \leq \left(\frac{k}{ad}+1\right)/z
\end{align*}
Setting $z=2^r(k/(ad)+1)$ and we have
\begin{align*}
    \pr\left(U^{(k)} \geq r+\log_2\left(\frac{k}{ad}+1\right)\right) &\leq 2^{-r}.
\end{align*}
We thus have proved the tail bound for $U$.
\end{proof}

Theorem \ref{thm:mean_var} characterizes the mean and variance of the estimator. We now analyze the space. The vector $V$ is encoded with length at most $M_*$ by construction. We are left to analyze the memory space needed for the scale counter $U$. 
\begin{corollary}\label{cor:upper_bound}
For $\sigma\in(0,1/3)$, there exists an $(n,d,\sigma)$-counter with state space size $\log_2|\Omega|=\log_2\log_2 n + O(d\log_2\sigma^{-1})$.
\end{corollary}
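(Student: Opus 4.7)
The plan is to invoke Theorem \ref{thm:mean_var} with a carefully chosen parameter $a$, and then patch the fact that the scale counter $U$ is not deterministically bounded by introducing an explicit cap. First, to achieve the required relative variance I would set $a = \lceil 5/(3\sigma^2) + 1/3 \rceil = \Theta(\sigma^{-2})$, which ensures $5/(6a-2) \leq \sigma^2/2$. By Theorem \ref{thm:mean_var}, the unmodified algorithm then satisfies $\E|\hat{x}-x|^2 \leq (\sigma^2/2)|x|^2$ and the relative vector $V$ occupies at most $M_* = 4d + 2d\log_2(1+a) = O(d\log_2 \sigma^{-1})$ ternary symbols, hence $O(d\log_2\sigma^{-1})$ bits.

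Next, I would truncate $U$ at $U_{\max} = \lceil \log_2(n/(ad)+1) + \log_2(2d/\sigma^2)\rceil$. Applying the tail bound from Theorem \ref{thm:mean_var} with $r = \log_2(2d/\sigma^2)$ gives $\Pr(U^{(n)} \geq U_{\max}) \leq \sigma^2/(2d)$. The modified algorithm behaves identically to the original one unless a scale-up would push $U$ to $U_{\max}$, in which case it enters a halt state (flagged by a single extra bit) and outputs $\hat{x} = 0$ from that moment on. In particular, the two algorithms are coupled so that their outputs agree on the event $\mathcal{G} = \{U^{(n)} < U_{\max}\}$.

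To bound the variance I would split on $\mathcal{G}$. On $\mathcal{G}$ the outputs coincide with the unmodified algorithm, so Theorem \ref{thm:mean_var} contributes at most $(\sigma^2/2)|x|^2$. On $\mathcal{G}^c$ the squared error is $|x|^2 \leq n^2 \leq d|x|^2$, where the last inequality uses Cauchy--Schwarz: $n = \sum_j x_j \leq \sqrt{d}\,|x|$. Multiplying by $\Pr(\mathcal{G}^c) \leq \sigma^2/(2d)$ yields an additional $(\sigma^2/2)|x|^2$, and summing gives $\E|\hat{x}-x|^2 \leq \sigma^2|x|^2$ as required.

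Finally, for the space budget: $V$ uses $O(d\log_2\sigma^{-1})$ bits; the halt flag uses $1$ bit; and $U$ uses $\log_2 U_{\max} \leq \log_2\log_2 n + O(\log\log(d/\sigma))$ bits. Since $\log\log(d/\sigma) = O(d\log_2\sigma^{-1})$ trivially, the total is $\log_2\log_2 n + O(d\log_2\sigma^{-1})$. The main obstacle I anticipate is precisely the truncation step: Theorem \ref{thm:mean_var} only bounds the tail of $U$, so care is needed to verify that the low-probability overflow event does not spoil the $\sigma^2|x|^2$ target — the factor-of-$d$ loss from bounding $n^2$ by $d|x|^2$ is exactly compensated by choosing the tail threshold $r$ with an extra $\log_2 d$ term, which is why the estimate closes cleanly.
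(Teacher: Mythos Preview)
Your proposal is correct and follows the same approach as the paper: choose $a=\Theta(\sigma^{-2})$, cap $U$ using the tail bound in Theorem~\ref{thm:mean_var}, output $0$ on overflow, and split the mean squared error on the overflow event. The only unnecessary detour is bounding the halt-state error $|0-x|^2=|x|^2$ by $d|x|^2$ via $n^2$; the paper simply sets the failure probability to $\tau=\sigma^2/2$ and uses $\Pr(\mathcal G^c)\cdot|x|^2\le(\sigma^2/2)|x|^2$ directly, so your extra $\log_2 d$ in the threshold $r$ is not needed (though it is harmless for the final asymptotic bound).
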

\begin{proof}
If $n\leq \sigma^{-1}$, we may use $d$ deterministic counters taking $d\log_2 n \leq d\log_2\sigma^{-1}$ space which surely form a $(n,d,\sigma)$-counter. We thus assume $n>\sigma^{-1}$.

With a fixed space budget for storing the scale counter $U$, there is a chance that $U$ grows too large to be stored. 
In such case, the algorithm simply enters a fail state and returns $(0,\ldots,0)$ upon query. For $\tau>0$, we store $U$ only up to $U_* = \log_2(\tau^{-1}) +\log_2(n/(ad)+1)$. By design $U$ is non-decreasing in time and it suffices to analyze $U$ at the end of the stream. By Theorem \ref{thm:mean_var}, we know $
\pr(U\geq U_*)\leq  \tau.$ The estimate $\hat{x}$ can be written as $\ind{U<U_*}2^UV$. Its mean squared error is
\begin{align*}
    \E \left|\ind{U<U_*}2^{U}V-x\right|^2 &=\E\left( \ind{U<U_*}|2^{U}V-x|^2+\ind{U\geq U_*}|x|^2\right)\\
    &\leq \E |2^{U}V-x|^2 + \pr(U\geq U_*)|x|^2
    \intertext{Note that $ \E |2^{U}V-x|^2\leq \frac{5}{6a-2}|x|^2$ by Theorem \ref{thm:mean_var} and $\pr(U\geq U_*)\leq \tau$ by construction.}
    &\leq \left(\frac{5}{6a-2}+\tau\right)|x|^2 
\end{align*}
We set $\tau = \sigma^2/2$ and $a=2\sigma^{-2}$ and have
\begin{align*}
    \frac{5}{6a-2}+\tau =\frac{5}{12\sigma^{-2}-2}+\sigma^2/2 \leq \sigma^2,
\end{align*}
since $\sigma<1$. In this setting, we have $\E | \ind{U<U_*}2^{U}V-x|^2\leq \sigma^2|x|^2$ which implies the algorithm is an $(n,d,\sigma)$-counter. We now compute its space usage. By construction, the space used by $U$ is at most $\log_2(\log_2(\tau^{-1}) +\log_2(n/(da)+1)))\leq \log_2\log_2n+O(1)$ bits since we assumed $\tau^{-1}<n$. The space used by the vector $V$ is by construction $M_*=4d+d\log_2(1+a)=O(d\log \sigma^{-1})$.  Note that $V$ is encoded with a alphabet set of size $3$ ($\{0,1,|\}$) and $M_*$ is the maximum number of symbols. One may encode $V$ using bits by adding an additional $\log_2 3$ factor to the leading constant. We conclude that the total space needed to store $(U,V)$ is $\log_2\log_2 n + O(d\log\sigma^{-1})$.
\end{proof}

\section{Lower Bound: Multiplicative Space Covering}\label{sec:lower}
Recall that by Definition \ref{def:cover}, 
for any subset $A,R\subset \R^d$ and $\sigma>0$, we say $R$ is an $\sigma$-multiplicative covering of $A$ if for any $x\in A$ there exists $y\in R$ such that $|x-y|<\sigma |x|$.

We now prove that any correct $(n,d,\sigma)$-counter will induce an $O(\sigma)$-multiplicative covering of the space. 
The proof here will be slightly more complicated than what we described in \S \ref{sec:lower_intro}. Previously we assumed the algorithm returns a fixed estimate for a given state and therefore the set of estimates naturally form a covering. In general, a probabilistic algorithm is allowed to produce a randomized answer even if the final memory state is fixed. We show that even when estimates are randomized for a given state, an $(n,d,\sigma)$-counter will still induce a covering of the space.
\begin{lemma}\label{lem:integer_to_real}
If $\Omega$ is the state space of an $(n,d,\sigma)$-counter with $\sigma<1/3$, then there exists a  $4\sigma$-multiplicative covering $R$ of $\{x\in\R_+^d \mid \sqrt{d}\sigma^{-1}\leq|x|\leq  n/\sqrt{d}\}$ with $|R|\leq |\Omega|$.
\end{lemma}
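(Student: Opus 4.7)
My plan is to construct $R$ by choosing, for every memory state $\omega\in\Omega$, a single deterministic representative $y_\omega\in\R^d$ and setting $R=\{y_\omega:\omega\in\Omega\}$, which immediately gives $|R|\leq|\Omega|$. Since the query procedure of an $(n,d,\sigma)$-counter is a (possibly randomized) function of the final memory state alone, the conditional law of the reported estimate $\hat{x}$ given $\omega$ does not depend on the input, so the natural choice
\[
y_\omega:=\E[\hat{x}\mid \text{final state}=\omega]
\]
is well-defined as a function of $\omega$ only.

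The proof then proceeds in three steps. First, I would discretize the real-valued target point: given $x\in\R_+^d$ with $\sqrt{d}\sigma^{-1}\leq|x|\leq n/\sqrt{d}$, let $\tilde{x}=\lfloor x\rfloor\in\N^d$ (entrywise floor). Then $|\tilde{x}-x|\leq\sqrt{d}\leq \sigma|x|$ by the lower bound on $|x|$, and $|\tilde{x}|_1\leq|x|_1\leq\sqrt{d}|x|\leq n$, so $\tilde{x}$ is a legal input realizable by a stream of length at most $n$; also $\tilde{x}\neq 0$ since $\sigma<1/3$ ensures $|\tilde{x}|\geq \sqrt{d}(\sigma^{-1}-1)>0$.

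Second, I would apply the $(n,d,\sigma)$-guarantee to the stream realizing $\tilde{x}$. Letting $\mu$ denote the induced distribution on final states and invoking the bias--variance decomposition,
\begin{align*}
\sigma^2|\tilde{x}|^2 &> \E|\hat{x}-\tilde{x}|^2 = \sum_{\omega}\mu(\omega)\bigl(|y_\omega-\tilde{x}|^2 + \mathrm{Var}[\hat{x}\mid\omega]\bigr) \geq \sum_{\omega}\mu(\omega)|y_\omega-\tilde{x}|^2.
\end{align*}
Hence the $\mu$-weighted average of $|y_\omega-\tilde{x}|^2$ is strictly below $\sigma^2|\tilde{x}|^2$, so at least one $\omega$ in the support satisfies $|y_\omega-\tilde{x}|<\sigma|\tilde{x}|\leq \sigma|x|$. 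Third, the triangle inequality gives $|y_\omega-x|\leq|y_\omega-\tilde{x}|+|\tilde{x}-x|<\sigma|x|+\sigma|x|=2\sigma|x|\leq 4\sigma|x|$, exhibiting the required cover point in $R$.

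The one subtle point -- essentially the only thing to watch -- is the randomized-query issue: one must notice that replacing $\hat{x}$ by its conditional mean $y_\omega$ can only shrink the expected squared error, exactly as the bias--variance identity above shows, so a covering built from conditional means is no worse than one from the raw random outputs. Everything else is routine: the hypothesis $|x|\geq \sqrt{d}\sigma^{-1}$ precisely absorbs the $\sqrt{d}$ rounding error into $\sigma|x|$, and $|x|_1\leq\sqrt{d}|x|_2$ keeps the discretized stream within the length budget $n$. The resulting covering constant is in fact $2\sigma$, which is tighter than the claimed $4\sigma$ and leaves slack for the statement as written.
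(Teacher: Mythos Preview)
Your proof is correct and in fact tighter than the paper's, yielding a $2\sigma$-cover rather than $4\sigma$. The difference lies in how the representative for each state is chosen. You take $y_\omega=\E[\hat{x}\mid\omega]$ and invoke bias--variance to conclude that the $\mu$-average of $|y_\omega-\tilde{x}|^2$ is at most $\E|\hat{x}-\tilde{x}|^2<\sigma^2|\tilde{x}|^2$, so some $y_\omega$ is within $\sigma|\tilde{x}|$ of $\tilde{x}$ directly. The paper instead assigns to each state $s$ the integer vectors $x$ for which $s$ minimizes $\E|g(s')-x|^2$, picks an arbitrary integer representative $x_*^{(s)}$ from that cell, and couples two integer points $\lfloor x\rfloor$ and $x_*^{(s)}$ sharing the same state to get $|x_*^{(s)}-\lfloor x\rfloor|<\sigma(|x_*^{(s)}|+|\lfloor x\rfloor|)$; solving this self-referential bound costs a factor $\tfrac{2}{1-\sigma}$, which is where the extra $2$ comes from. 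Your conditional-mean choice short-circuits that detour entirely. The only wrinkle is that $y_\omega$ could fail to exist if the query output from state $\omega$ has no first moment, but since $|\Omega|$ is finite and $\E|\hat{x}-\tilde{x}|^2<\infty$ for every reachable $\tilde{x}$, each state in the support of $\mu$ has finite conditional second moment; setting $y_\omega=0$ on unreachable states handles the rest harmlessly.
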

\begin{proof}
Let $g:\Omega\to \R^d$ be the randomized function that returns the estimate $\hat{x}=g(S)$ based on the final state $S\in \Omega$. Let $\mathcal{X}=\{x\in \N^d \mid \sum_{j=1}^d x_j \leq n\}$ be the space of all possible count vectors. By definition of the $(n,d,\sigma)$-counter, for any $x\in \mathcal{X}$, we have
\begin{align*}
    \E |g(S)-x|^2 < \sigma^2 |x|^2.
\end{align*}
For any $x\in \mathcal{X}$, define $s_x=\arg\min_{s\in \Omega}  \E |g(s)-x|^2 $. Thus we have
\begin{align}
   \E |g(s_x)-x|^2  \leq \E |g(S)-x|^2 <\sigma^2 |x|^2. \label{eq:rand1}
\end{align}
This implies  $\E |g(s_x)-x| \leq \sqrt{\E |g(s_x)-x|^2}<\sigma|x|$.
Suppose for some $y\in \mathcal{X}$ with $s_y=s_x$. By using the same randomness for $g(s_x)$ and $g(s_y)$, we have
\begin{align}
    |x-y| &= |g(s_x)-x-(g(s_y)-y)| \leq |g(s_x)-x|+|(g(s_y)-y)|\label{eq:rand2}\\
    \intertext{By Equation (\ref{eq:rand1}), we know $\E |g(s_x)-x|<\sigma|x|^2$ and $\E |g(s_y)-y|<\sigma|y|^2$. Take expectation on both sides of Equation (\ref{eq:rand2}) and we have }    
    |x-y| &\leq \E |g(s_x)-x|+\E |(g(s_y)-y)|
     < \sigma(|x|+|y|).\label{eq:gap}
\end{align}
We now group the count vectors $x$ by the value $s_x$. For any $s\in\Omega$, let $A_s=\{x\in \mathcal{X} \mid s_x=s\}$ and $\{A_s|s\in \Omega\}$ form a partition of $\mathcal{X}$. For each $s$ that $A_s\neq \emptyset$, we choose a representation $x_*^{(s)}\in A_s$ according to the lexicographic order. Next we are going to show that $R=\{x_*^{(s)} \mid s\in \Omega,A_s\neq\emptyset\}$ forms a good multiplicative covering of $\{x\in\R_+^d \mid d\sigma^{-1}<|x|< n/d\}$.

For any $x\in \R_+^d$, denote $\floor{x}=(\floor{x_1},\ldots,\floor{x_d})$. We have $|x-\floor{x}|<\sqrt{d}$. Note that
\begin{align*}
    \sum_{j=1}^d \floor{x_j} \leq \sqrt{d}|\floor{x}| \leq \sqrt{d}|x|.
\end{align*}
If $|x|\leq n/\sqrt{d}$, then $ \sum_{j=1}^d \floor{x_j}\leq \sqrt{d}|x|\leq n$, which implies that $\floor{x}\in \mathcal{X}$. Set $y=x_*^{(s_{\floor{x}})}\in R$. Since $s_{y}=s_{\floor{x}}$, we have by Equation (\ref{eq:gap}),
\begin{align*}
    |y-\floor{x}|<\sigma(|y|+|\floor{x}|)\leq \sigma(|\floor{x}|+|y-\floor{x}|+|\floor{x}|)
\end{align*}
Reorganizing,
\begin{align*}
    |y-\floor{x}|<\frac{2\sigma}{1-\sigma} |\floor{x}|.
\end{align*}
Thus we have
\begin{align*}
    |y-x| &\leq |y-\floor{x}|+|x-\floor{x}|\\
    &<\frac{2\sigma}{1-\sigma} |\floor{x}|+\sqrt{d}\\
    &\leq \frac{2\sigma}{1-\sigma} |x|+\sqrt{d},
\end{align*}
If $|x|\geq\sigma^{-1}\sqrt{d}$, then $\sqrt{d}\leq\sigma|x|$. Thus we have
\begin{align*}
    |y-x|< \frac{2\sigma}{1-\sigma} |x|+\sigma|x| =\frac{3\sigma-\sigma^2}{1-\sigma}|x|.
\end{align*}
Since $\sigma<1/3$, we have $\frac{3\sigma-\sigma^2}{1-\sigma}\leq 4\sigma.$
Thus $|y-x|<4\sigma|x|$. Note that the above argument holds for any $x$ with $|x|\in [\sigma^{-1}\sqrt{d},n/\sqrt{d}]$. Therefore $R$ is a $4\sigma$-multiplicative cover of $\{x\in\R_+^d \mid \sigma^{-1}\sqrt{d}\leq |x|\leq n/\sqrt{d}\}$. Finally, note that the covering $R= \{x_*^{(s)}\mid s\in\Omega,A_s\neq\emptyset\}$ has size at most $|\Omega|$.
\end{proof}

Lemma \ref{lem:integer_to_real} shows that the state space of an $(n,d,\sigma)$-counter will be at least the size of an optimal $4\sigma$-multiplicative covering of the shell\footnote{Technically, the shell intersected with the positive cone.} $\{x\in\R_+^d \mid \sqrt{d}\sigma^{-1}\leq|x|\leq  n/\sqrt{d}\}$. We now bound the size of the shell covering. The first step is to bound the size of the sphere covering. 
Covering a sphere with smaller spheres is a classic problem \cite{rogers1963covering}. We prove the following lemma based on \cite{boroczky2003covering}.
\begin{lemma}\label{lem:cover_sphere}
For any $\sigma\in(0,1/3)$,
if $R\subset \R^d$ is an $\sigma$-multiplicative cover of $\{x\in\R_+^d\mid |x|=1\}$, then $|R|\geq 2^{-d}\sigma^{-(d-1)}$.
\end{lemma}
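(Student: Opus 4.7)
My plan is to argue by spherical-area comparison. For each $y\in R$, define the spherical cap $K_y := B(y,\sigma)\cap S^{d-1}$, where $B(y,\sigma)$ is the open Euclidean $\sigma$-ball. Since $R$ is a $\sigma$-multiplicative cover of $S^{d-1}_+ := \{x\in\R_+^d : |x|=1\}$, the family $\{K_y\}_{y\in R}$ covers $S^{d-1}_+$. Writing $\mu$ for the $(d-1)$-dimensional spherical (Hausdorff) measure, $S_{d-1}$ for the surface area of $S^{d-1}$, and $\omega_{d-1}$ for the volume of the unit $(d-1)$-ball, I get
\begin{align*}
|R|\cdot\sup_{y\in R}\mu(K_y)\;\geq\;\sum_{y\in R}\mu(K_y)\;\geq\;\mu(S^{d-1}_+)\;=\;\frac{S_{d-1}}{2^d},
\end{align*}
so the whole problem reduces to the single-cap estimate $\mu(K_y)\leq S_{d-1}\sigma^{d-1}$, which immediately yields $|R|\geq 2^{-d}\sigma^{-(d-1)}$.

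To prove this estimate I would first bound the angular radius $\theta$ of $K_y$. By the law of cosines, if $|y|=r$ and $K_y\neq\varnothing$ then $\cos\theta=(r^2+1-\sigma^2)/(2r)$; a one-variable optimisation of this expression over $r\in[1-\sigma,1+\sigma]$ shows that $\sin\theta\leq\sigma$, with the maximum attained at $r=\sqrt{1-\sigma^2}$ (notably the worst-case $y$ lies slightly \emph{inside} the sphere rather than on it). Then I would integrate in spherical coordinates via the substitution $u=\sin\phi$,
\begin{align*}
\mu(K_y)\;\leq\;S_{d-2}\int_0^{\arcsin\sigma}\sin^{d-2}\phi\,d\phi\;=\;S_{d-2}\int_0^{\sigma}\frac{u^{d-2}}{\sqrt{1-u^2}}\,du\;\leq\;\frac{3}{\sqrt{8}}\,\omega_{d-1}\,\sigma^{d-1},
\end{align*}
using $\sigma\leq 1/3$ to bound $(1-u^2)^{-1/2}\leq 3/\sqrt{8}$ on $[0,\sigma]$ together with the identity $\omega_{d-1}=S_{d-2}/(d-1)$.

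The last step is the dimensionless comparison $(3/\sqrt{8})\omega_{d-1}\leq S_{d-1}=d\omega_d$, equivalently $d\omega_d/\omega_{d-1}\geq 3/\sqrt{8}\approx 1.06$. The base cases $d=1,2,3$ give ratios $2,\pi,4$, and the recursion $\omega_d=(2\pi/d)\omega_{d-2}$ together with direct checking yields monotonicity of the sequence $d\omega_d/\omega_{d-1}$ (with asymptotic $\sim\sqrt{2\pi d}$), so the inequality holds for every $d\geq 1$.

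I expect the angular-radius optimisation to be the delicate step: it is tempting to assume $y\in S^{d-1}$, but the extremal cap is attained at $|y|=\sqrt{1-\sigma^2}<1$, and one must also observe that $y$ with $|y|\notin[1-\sigma,1+\sigma]$ contributes an empty cap and may be discarded. Everything else is a routine sphere-covering volume calculation.
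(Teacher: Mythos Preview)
Your approach is the same area-comparison argument as the paper's: both use $|R|\cdot(\text{max cap area})\geq\mu(S_+^{d-1})=2^{-d}S_{d-1}$, and both handle $d=1$ separately. The only difference is that the paper black-boxes the single-cap bound by citing Corollary~3.2 of B\"or\"oczky--Wintsche (which yields $S_{d-1}/\eta_\sigma\geq\sqrt{2\pi(d-1)}\sqrt{1-\sigma^2}\,\sigma^{-(d-1)}$), whereas you redo that calculation from scratch via the optimisation $\sin\theta\leq\sigma$ and the spherical-coordinate integral; your route is correct and more self-contained.

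Two small loose ends in your write-up. First, the integral formula involving $S_{d-2}$ is undefined when $d=1$, so that case should be separated out explicitly (it is trivial: $|R|\geq 1>2^{-1}$), as the paper does. Second, the recursion $\omega_d=(2\pi/d)\omega_{d-2}$ gives only $a_{d+2}/a_d=(d+1)/d>1$ for $a_d:=d\omega_d/\omega_{d-1}$, which shows the even and odd subsequences each increase; that is already enough to conclude $a_d\geq\min(a_1,a_2)=2>3/\sqrt{8}$, but it does not by itself give full monotonicity of $a_d$ (for that you would need, e.g., log-convexity of $\Gamma$). Neither point affects the validity of the argument.
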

\begin{proof}
We first consider the case $d=1$. The target set $\{x\in\R_+\mid |x|=1\}=\{1\}$ has size 1 and thus $|R|= 1 >2^{-1}$. We now assume $d\geq 2$.

Let $\mathbb{S}^{d-1}$ be the unit sphere in $\R^d$. Let $|\mathbb{S}^{d-1}|$ be the surface area of the sphere. Let the maximum area on $\mathbb{S}^{d-1}$ that can be covered by a single ball with radius $\sigma$ be $\eta_\sigma$. By Corollary 3.2 in \cite{boroczky2003covering}, 
\begin{align*}
    \frac{|\mathbb{S}^{d-1}|}{\eta_\sigma} \geq \sqrt{2\pi( d-1)} \frac{\sqrt{1-\sigma^2}}{\sigma^{(d-1)}},
\end{align*}
if $\sqrt{1-\sigma^2} \geq 1/\sqrt{d}$, which is true since we  assumed $\sigma<1/3$ and $d\geq 2$.

On the other hand, by symmetry, $|\mathbb{S}^{d-1}\cap \R_+^d|=2^{-d}|\mathbb{S}^{d-1}|$. Therefore, the number of points in $R$ is at least
\begin{align*}
  \frac{|\mathbb{S}^{d-1}\cap \R_+^d|}{\eta_\sigma }  &\geq  2^{-d}\sqrt{2\pi( d-1)} \frac{\sqrt{1-\sigma^2}}{\sigma^{(d-1)}}\geq 2^{-d}\sigma^{-(d-1)},
\end{align*}
since $d\geq 2$ and $\sqrt{2\pi(1-1/9)}\geq 1$.
\end{proof}

With the bound for the sphere covering in hand, the next step is to count the number of \emph{layers} in the shell so that the coverings for different layers are disjoint. Define $\mathbb{S}_+^{d-1}=\mathbb{S}^{d-1}\cap \R_+^d$, $\alpha\mathbb{S}_+^{d-1}=\{x\in\R_+^d \mid |x|=\alpha\}$, and $[\alpha,\beta ]\mathbb{S}_+^{d-1}=\{x\in\R_+^d \mid |x|\in[\alpha,\beta]\}$.

\begin{corollary}\label{cor:cover_shell}
For $0<\alpha<\beta$ and $\sigma\in(0,1/3)$, if $R$ is an $\sigma$-multiplicative cover of $[\alpha,\beta]\mathbb{S}_+^{d-1}$, then $|R|\geq \frac{1}{3}2^{-d}\sigma^{-d}\log(\beta/(e\alpha))$.
\end{corollary}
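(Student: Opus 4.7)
The plan is to partition the shell $[\alpha,\beta]\mathbb{S}^{d-1}_+$ into geometrically-spaced spherical layers $\alpha_i \mathbb{S}^{d-1}_+$ with $\alpha_i = \alpha\rho^i$, apply the sphere-covering bound from Lemma \ref{lem:cover_sphere} to each layer, and sum the resulting per-layer lower bounds. This works provided the portions of $R$ that cover different layers are disjoint, which is the key constraint driving the choice of the spacing $\rho$.

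First I would take $\rho = e^{3\sigma}$ and define $\alpha_i = \alpha\rho^i$ for $i=0,1,\ldots,N-1$ with $N = \lfloor \log(\beta/\alpha)/(3\sigma)\rfloor + 1$, so that every $\alpha_i$ lies in $[\alpha,\beta]$. For each $i$, let $R_i = \{y\in R : |y-x|<\sigma|x|\text{ for some }x\in \alpha_i \mathbb{S}^{d-1}_+\}$ be the subset of $R$ responsible for covering the $i$th layer. Since $R$ is a $\sigma$-multiplicative cover of the shell, $R_i$ is a $\sigma$-multiplicative cover of $\alpha_i\mathbb{S}^{d-1}_+$; rescaling by $1/\alpha_i$ yields a $\sigma$-multiplicative cover of $\mathbb{S}^{d-1}_+$, and Lemma \ref{lem:cover_sphere} gives $|R_i|\geq 2^{-d}\sigma^{-(d-1)}$.

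The critical step is showing the $R_i$'s are pairwise disjoint. Any $y\in R_i$ must satisfy $|y|\in((1-\sigma)\alpha_i,(1+\sigma)\alpha_i)$, so $y\in R_i\cap R_j$ with $j>i$ would force $\alpha_j/\alpha_i < (1+\sigma)/(1-\sigma)$. A short derivative check shows that $e^{3\sigma}(1-\sigma) > 1+\sigma$ on $(0,1/3)$ (both sides equal $1$ at $\sigma=0$ and the derivative $e^{3\sigma}(2-3\sigma)-1$ of the difference is positive on this interval), hence $\rho^{j-i}\geq \rho = e^{3\sigma} > (1+\sigma)/(1-\sigma)$, a contradiction. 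Summing yields
\[
|R|\geq \sum_{i=0}^{N-1}|R_i|\geq N\cdot 2^{-d}\sigma^{-(d-1)}\geq \frac{\log(\beta/\alpha)}{3\sigma}\cdot 2^{-d}\sigma^{-(d-1)} = \tfrac{1}{3}\log(\beta/\alpha)\cdot 2^{-d}\sigma^{-d},
\]
and $\log(\beta/\alpha)\geq \log(\beta/(e\alpha))$ gives the stated bound (the claim is trivial when $\beta<e\alpha$, since the target is then non-positive while $|R|\geq 0$).

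The main subtlety is tuning $\rho$: we want it small so that $N$ is large (contributing an extra factor $\sigma^{-1}$ beyond the per-layer $\sigma^{-(d-1)}$), yet $\rho$ must strictly exceed $(1+\sigma)/(1-\sigma)\approx 1+2\sigma$ to keep the $R_i$'s disjoint. The hypothesis $\sigma<1/3$ provides exactly the slack that allows $\rho = e^{3\sigma}$ to meet both constraints, simultaneously capturing the optimal multiplicative density of the sphere covers and the promotion of the exponent from $\sigma^{-(d-1)}$ to $\sigma^{-d}$ in the final bound.
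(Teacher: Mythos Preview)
Your proof is correct and follows essentially the same approach as the paper: slice the shell into geometrically spaced spheres, observe that any $y\in R$ covering a point on $\alpha_i\mathbb{S}^{d-1}_+$ must have $|y|\in((1-\sigma)\alpha_i,(1+\sigma)\alpha_i)$ to obtain disjointness of the $R_i$, and invoke Lemma~\ref{lem:cover_sphere} on each layer. The only cosmetic difference is that the paper uses common ratio $1+3\sigma$ instead of your $e^{3\sigma}$ and does not anchor the layers at $\alpha$, which produces a layer count of $\frac{\log(\beta/\alpha)}{\log(1+3\sigma)}-1$ and hence the slightly weaker factor $\log(\beta/(e\alpha))$ in the statement; your version in fact yields the marginally stronger $\frac{1}{3}2^{-d}\sigma^{-d}\log(\beta/\alpha)$.
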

\begin{proof}
For any $x\in \R^d$, if $|x-y|<\sigma |x|$, then $|y|\in ((1-\sigma)|x|,(1+\sigma)|x|)$. Now define $b_k=(1+3\sigma)^k$ for $k\in \N$. Define $$R_k=R\cap \{x\in\R_+^d : |x| \in ((1-\sigma)b_k,(1+\sigma)b_k)\}$$ for any $k\in \N$. It is straightforward to check that $b_k(1+\sigma) < b_{k+1}(1-\sigma)$ since $\sigma<1/3$. Therefore $R_k$s are disjoint for different $k$s.  By construction, if $b_k \in [\alpha,\beta]$, then $R_k$  covers $b_k\mathbb{S}_+^{d-1}$. By Lemma \ref{lem:cover_sphere}, we know $|R_k|\geq 2^{-d}\sigma^{-(d-1)}$ (the \emph{multiplicative} covering of a sphere does not depend on the radius.) Thus 
\begin{align*}
    |R|\geq \sum_{k:b_k\in[\alpha,\beta]} |R_k| \geq 2^{-d}\sigma^{-(d-1)}\sum_{k:b_k\in[\alpha,\beta]}1,
\end{align*}
and we only need to count the number of $b_k$s that are in $[\alpha,\beta]$.
Note that
\begin{align*}
    b_k\in [\alpha,\beta] \iff k\log(1+3\sigma)\in[\log \alpha,\log \beta]\iff k\in[\frac{\log \alpha}{\log(1+3\sigma)},\frac{\log \beta}{\log(1+3\sigma)}].
\end{align*}
We conclude that 
\begin{align*}
    |R|&\geq  2^{-d}\sigma^{-(d-1)}(\frac{\log \beta-\log \alpha}{\log(1+3\sigma)}-1)
\intertext{relaxing $1+3\sigma< e^{3\sigma}$ and $1\leq 1/(3\sigma)$}
&> 2^{-d}\sigma^{-(d-1)}(\frac{1}{3\sigma}\log (\beta/\alpha)-\frac{1}{3\sigma})\\
&= \frac{1}{3}2^{-d}\sigma^{-d}\log(\beta/(e\alpha)),
\end{align*}
which lower bounds the size of $\sigma$-multiplicative covers of $[\alpha,\beta]\mathbb{S}_+^{d-1}$.
\end{proof}

We now finish the proof of the space lower bound of $(n,d,\sigma)$-counters.
\begin{corollary}\label{cor:lower_bound}
If $\Omega$ is the state space of an $(n,d,\sigma)$-counter with $\sigma<1/3$, 
then 
\begin{align*}
    |\Omega|\geq \frac{1}{3}2^{-d}(4\sigma)^{-d}\log( n/(e \sigma^{-1} d)).
    \intertext{This implies that for $n\geq e^2$, $d\geq 1$, and $\sigma<1/16$,}
    \log_2|\Omega| = \log_2\log_2 n + \Omega(d\log \sigma^{-1}).
\end{align*}
\end{corollary}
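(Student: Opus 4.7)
The plan is to chain the two preceding structural results, feeding the covering reduction of Lemma \ref{lem:integer_to_real} directly into the sphere-shell lower bound of Corollary \ref{cor:cover_shell}. First I would note that $\sigma<1/16$ implies $4\sigma<1/3$, so Lemma \ref{lem:integer_to_real} applies and yields a $4\sigma$-multiplicative cover $R$ of $\{x\in\R_+^d : \sqrt{d}\sigma^{-1}\leq |x|\leq n/\sqrt{d}\}$ with $|R|\leq|\Omega|$. I would then invoke Corollary \ref{cor:cover_shell} with $\alpha=\sqrt{d}\sigma^{-1}$, $\beta=n/\sqrt{d}$, and covering parameter $4\sigma$. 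This directly produces
\[
|\Omega|\geq \frac{1}{3}\cdot 2^{-d}(4\sigma)^{-d}\log\bigl(n/(e\sigma^{-1}d)\bigr),
\]
which is exactly the first inequality claimed in the corollary.

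For the asymptotic consequence I would take $\log_2$ of both sides to obtain
\[
\log_2|\Omega|\geq d\log_2\sigma^{-1}-3d-\log_2 3+\log_2\log\bigl(n/(e\sigma^{-1}d)\bigr).
\]
Since $\sigma<1/16$ forces $\log_2\sigma^{-1}>4$, the algebraic part dominates $-3d$ and delivers $d\log_2\sigma^{-1}-3d\geq \tfrac{1}{4}d\log_2\sigma^{-1}=\Omega(d\log_2\sigma^{-1})$, which is the second of the two terms in the target bound.

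The main obstacle will be harvesting a clean $\log_2\log_2 n$ additive term from the last summand, since $\log_2\log(n/(e\sigma^{-1}d))$ is useful only when $n$ truly dominates $e\sigma^{-1}d$. I plan to resolve this with a short case split. If $n\geq (e\sigma^{-1}d)^2$ then $\log(n/(e\sigma^{-1}d))\geq \tfrac{1}{2}\log n$, so $\log_2\log\bigl(n/(e\sigma^{-1}d)\bigr)\geq \log_2\log_2 n-O(1)$ and the desired $\log_2\log_2 n$ term falls out directly. Otherwise $\log_2 n\leq 2\log_2(e\sigma^{-1}d)$, and a brief sub-case split on whether $d\leq \sigma^{-1}$ or $d>\sigma^{-1}$ (using $\log_2\sigma^{-1}>4$ and $d\geq 1$) shows $\log_2\log_2 n=O(d\log_2\sigma^{-1})$, so the $\Omega(d\log_2\sigma^{-1})$ already obtained absorbs the missing term and the combined bound $\log_2|\Omega|\geq \log_2\log_2 n+\Omega(d\log_2\sigma^{-1})$ holds either way. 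Everything else reduces to substitution of the parameters $\alpha,\beta$ into the already-proved shell covering lemma, so this case split is the only step that requires any real attention.
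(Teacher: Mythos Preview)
Your proposal is correct and follows essentially the same route as the paper: chain Lemma~\ref{lem:integer_to_real} into Corollary~\ref{cor:cover_shell} for the first inequality, take $\log_2$, use $\sigma<1/16$ to get $d\log_2\sigma^{-1}-3d\geq \tfrac14 d\log_2\sigma^{-1}$, and then case-split on whether $n$ dominates $\sigma^{-1}d$ to recover the $\log_2\log_2 n$ term. The only cosmetic differences are your threshold $n\gtrless (e\sigma^{-1}d)^2$ versus the paper's $\sigma^{-1},d\gtrless n^{1/6}$, and that the constraint $4\sigma<1/3$ is actually what lets you invoke Corollary~\ref{cor:cover_shell} (Lemma~\ref{lem:integer_to_real} itself only needs $\sigma<1/3$).
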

\begin{proof}
By Lemma \ref{lem:integer_to_real}, we know there exists a $4\sigma$-multiplicative covering $R$ of $[\sqrt{d}\sigma^{-1},n/\sqrt{d}]$ with $|R|\leq |\Omega|$. By Corollary \ref{cor:cover_shell}, we know $|R|\geq \frac{1}{3}2^{-d}(4\sigma)^{-d}\log(n/(ed\sigma^{-1}))$. This proves the first statement. For the second statement, note that
\begin{align*}
    \log_2|\Omega| &\geq d\log_2\sigma^{-1}-3d-\log_2 3 + \log_2\log (n/(e\sigma^{-1}d)).
    \intertext{If $\sigma<1/16$ then $\log_2\sigma^{-1}>4$ and thus $3\leq \frac{3}{4}\log_2\sigma^{-1}$. This implies that $d\log_2\sigma^{-1}-3d\geq \frac{1}{4}d\log_2\sigma^{-1}$. Furthermore, if $\sigma^{-1}\geq n^{1/6}$ or $d\geq n^{1/6}$, then $\log_2\log_2 n =o(d\log \sigma^{-1})$. Otherwise, we have $n/(d\sigma^{-1})\geq n^{1/3}$. Note that $e\leq n^{1/2}$ by assumption. Thus we have $ \log_2\log (n/(e\sigma^{-1}d)) \geq \log_2\log n^{1/6} = \log_2\log_2 n + \log_2 (\frac{1}{6}\log 2)$. We conclude that }
    \log_2|\Omega| &= \log_2\log_2 n + \Omega(d\log \sigma^{-1}) - O (1)\\
    &= \log_2\log_2 n + \Omega(d\log \sigma^{-1}),
\end{align*}
when $n\geq e^2$, $d\geq 1$, and $\sigma<1/16$.
\end{proof}

% \section{Conclusion}
% In this work we have developed an optimal algorithm for $d$-dimensional counting in terms of  Euclidean mean square error, indicating that tracking a $d$-dimensional count vector over an incremental stream (the online problem) has the same space complexity with representing a given count vector in a sparse grid up to multiplicative error (the offline problem). 

% Unlike most algorithms where compression is used only as an add-on and does not affect the algorithm's execution, the $d$-dimensional counter here will crucially depend on the way the vector is compressed. The set of all estimates
% \begin{align*}
%     \left\{2^U(V_1,\ldots,V_d) : U\in \N, V\in \N^d, \text{ the encoding length of $(V_1,\ldots,V_d)$ is less than $M_*$}\right\}
% \end{align*}
% thus present an \emph{algorithmic} multiplicative covering of the space $\{x\in\N^d:\sum_{j=1}^dx_j\leq n\}$. 

\bibliographystyle{plain}
\bibliography{refs}

\appendix
\section{Sample Runs}\label{sec:sample_runs}
For demonstration, we run the $d$-dimension counter defined in \S \ref{sec:upper} with $d=4$ and memory budget $M_*=12$. The input is generated randomly at each step with probability distribution $(1/2,1/4,1/8,1/8)$. The table below tracks the state of the algorithm whenever $(U,V)$ is changed. Recall that $x$ is the input count vector, the estimator is $\hat{x}=2^UV$, and the encoding of $V$ is defined in Definition \ref{def:code}. We make the following remarks.
\begin{itemize}
    \item When $U=0$, $V$ tracks $x$ deterministically. 
    \item When $x$ is incremented, the encoding length $\psi(V)$ of $V$ increases by at most one. When $\psi(V)$ reaches $M_*+1$, the encoding length decreases by at least 1 and at most $2d$ during the \texttt{Scale-up}. when $U\geq 1$, i.e., after the first \texttt{Scale-up}, the encoding length is between $M_*+1-2d$ and $M_*$. These are the key properties we have used in the proof of Theorem \ref{thm:mean_var}.
    \item Since the rates for inserting the entries are skewed, the space used to encode each entry is different. Larger entries are expected to take more space. 
\end{itemize}
\begin{verbatim}
  U  |     V     |       estimate        |           x           |  encoded V
-----+-----------+-----------------------+-----------------------+---------------
  0  |  0 0 0 0  |     0    0    0    0  |     0    0    0    0  |  ||||
  0  |  0 1 0 0  |     0    1    0    0  |     0    1    0    0  |  |0|||
  0  |  0 1 0 1  |     0    1    0    1  |     0    1    0    1  |  |0||0|
  0  |  1 1 0 1  |     1    1    0    1  |     1    1    0    1  |  0|0||0|
  0  |  2 1 0 1  |     2    1    0    1  |     2    1    0    1  |  1|0||0|
  0  |  2 1 0 2  |     2    1    0    2  |     2    1    0    2  |  1|0||1|
  0  |  2 2 0 2  |     2    2    0    2  |     2    2    0    2  |  1|1||1|
  0  |  3 2 0 2  |     3    2    0    2  |     3    2    0    2  |  10|1||1|
  0  |  4 2 0 2  |     4    2    0    2  |     4    2    0    2  |  11|1||1|
  0  |  4 3 0 2  |     4    3    0    2  |     4    3    0    2  |  11|10||1|
  0  |  5 3 0 2  |     5    3    0    2  |     5    3    0    2  |  100|10||1|
  0  |  5 3 1 2  |     5    3    1    2  |     5    3    1    2  |  100|10|0|1|
  0  |  6 3 1 2  |     6    3    1    2  |     6    3    1    2  |  101|10|0|1|
  0  |  6 4 1 2  |     6    4    1    2  |     6    4    1    2  |  101|11|0|1|
  0  |  6 4 2 2  |     6    4    2    2  |     6    4    2    2  |  101|11|1|1|
  1  |  3 3 1 1  |     6    6    2    2  |     6    5    2    2  |  10|10|0|0|
  1  |  3 3 1 2  |     6    6    2    4  |     6    5    2    3  |  10|10|0|1|
  1  |  4 3 1 2  |     8    6    2    4  |     7    5    2    3  |  11|10|0|1|
  1  |  5 3 1 2  |    10    6    2    4  |     9    5    2    3  |  100|10|0|1|
  1  |  6 3 1 2  |    12    6    2    4  |    11    5    2    3  |  101|10|0|1|
  1  |  6 3 2 2  |    12    6    4    4  |    13    6    3    3  |  101|10|1|1|
  1  |  7 3 2 2  |    14    6    4    4  |    14    6    3    3  |  110|10|1|1|
  1  |  7 4 2 2  |    14    8    4    4  |    14    7    3    3  |  110|11|1|1|
  1  |  8 4 2 2  |    16    8    4    4  |    15    7    3    3  |  111|11|1|1|
  2  |  4 2 2 1  |    16    8    8    4  |    16    7    4    3  |  11|1|1|0|
  2  |  4 3 2 1  |    16   12    8    4  |    19   11    4    4  |  11|10|1|0|
  2  |  4 4 2 1  |    16   16    8    4  |    19   12    4    4  |  11|11|1|0|
  2  |  4 5 2 1  |    16   20    8    4  |    23   13    6    5  |  11|100|1|0|
  2  |  4 6 2 1  |    16   24    8    4  |    25   14    7    5  |  11|101|1|0|
  3  |  3 3 1 0  |    24   24    8    0  |    29   14    7    5  |  10|10|0||
  3  |  4 3 1 0  |    32   24    8    0  |    30   14    7    5  |  11|10|0||
  3  |  5 3 1 0  |    40   24    8    0  |    31   14    7    5  |  100|10|0||
  3  |  5 4 1 0  |    40   32    8    0  |    36   16    7    6  |  100|11|0||
  3  |  5 4 2 0  |    40   32   16    0  |    45   20    9    7  |  100|11|1||
  3  |  6 4 2 0  |    48   32   16    0  |    50   25   11    8  |  101|11|1||
  3  |  7 4 2 0  |    56   32   16    0  |    52   28   12    8  |  110|11|1||
  3  |  7 4 3 0  |    56   32   24    0  |    52   28   13    8  |  110|11|10||
  3  |  8 4 3 0  |    64   32   24    0  |    57   33   14   12  |  111|11|10||
  4  |  4 2 1 0  |    64   32   16    0  |    64   34   15   14  |  11|1|0||
  4  |  5 2 1 0  |    80   32   16    0  |    76   43   17   17  |  100|1|0||
  4  |  6 2 1 0  |    96   32   16    0  |    84   46   19   18  |  101|1|0||
  4  |  7 2 1 0  |   112   32   16    0  |    90   49   20   18  |  110|1|0||
  4  |  7 3 1 0  |   112   48   16    0  |    90   50   20   18  |  110|10|0||
  4  |  8 3 1 0  |   128   48   16    0  |    93   51   22   19  |  111|10|0||
  4  |  9 3 1 0  |   144   48   16    0  |    94   53   22   19  |  1000|10|0||
  4  | 10 3 1 0  |   160   48   16    0  |   125   63   25   25  |  1001|10|0||
  4  | 10 4 1 0  |   160   64   16    0  |   134   69   28   29  |  1001|11|0||
  5  |  5 2 1 1  |   160   64   32   32  |   137   69   28   30  |  100|1|0|0|
  5  |  5 3 1 1  |   160   96   32   32  |   139   73   29   30  |  100|10|0|0|
  5  |  5 4 1 1  |   160  128   32   32  |   154   76   32   30  |  100|11|0|0|
  5  |  5 4 2 1  |   160  128   64   32  |   216  102   52   44  |  100|11|1|0|
  5  |  6 4 2 1  |   192  128   64   32  |   234  105   55   46  |  101|11|1|0|
  5  |  7 4 2 1  |   224  128   64   32  |   236  105   55   46  |  110|11|1|0|
  5  |  8 4 2 1  |   256  128   64   32  |   277  118   62   55  |  111|11|1|0|
  6  |  4 2 1 0  |   256  128   64    0  |   279  118   63   55  |  11|1|0||
  6  |  4 2 1 1  |   256  128   64   64  |   315  133   70   64  |  11|1|0|0|
  6  |  5 2 1 1  |   320  128   64   64  |   322  139   73   65  |  100|1|0|0|
  6  |  6 2 1 1  |   384  128   64   64  |   345  154   79   66  |  101|1|0|0|
  6  |  6 2 1 2  |   384  128   64  128  |   374  176   88   76  |  101|1|0|1|
  6  |  7 2 1 2  |   448  128   64  128  |   399  186   99   87  |  110|1|0|1|
  6  |  8 2 1 2  |   512  128   64  128  |   435  195  105   90  |  111|1|0|1|
  6  |  9 2 1 2  |   576  128   64  128  |   478  225  120  107  |  1000|1|0|1|
  7  |  4 1 0 1  |   512  128    0  128  |   593  300  134  136  |  11|0||0|
  7  |  5 1 0 1  |   640  128    0  128  |   599  302  139  138  |  100|0||0|
  7  |  5 1 1 1  |   640  128  128  128  |   634  321  146  146  |  100|0|0|0|
  7  |  6 1 1 1  |   768  128  128  128  |   686  351  153  156  |  101|0|0|0|
  7  |  6 2 1 1  |   768  256  128  128  |   687  353  155  157  |  101|1|0|0|
  7  |  6 3 1 1  |   768  384  128  128  |   732  377  170  174  |  101|10|0|0|
  7  |  7 3 1 1  |   896  384  128  128  |   785  419  183  189  |  110|10|0|0|
  7  |  7 4 1 1  |   896  512  128  128  |   869  464  203  216  |  110|11|0|0|
  7  |  8 4 1 1  |  1024  512  128  128  |   969  522  235  236  |  111|11|0|0|
  7  |  8 4 1 2  |  1024  512  128  256  |   994  536  244  240  |  111|11|0|1|
  8  |  4 2 0 1  |  1024  512    0  256  |  1058  570  259  252  |  11|1||0|
  8  |  4 3 0 1  |  1024  768    0  256  |  1102  596  269  262  |  11|10||0|
  8  |  5 3 0 1  |  1280  768    0  256  |  1138  623  282  274  |  100|10||0|
  8  |  6 3 0 1  |  1536  768    0  256  |  1217  660  304  295  |  101|10||0|
  8  |  7 3 0 1  |  1792  768    0  256  |  1222  665  307  296  |  110|10||0|
  8  |  7 4 0 1  |  1792 1024    0  256  |  1344  718  339  321  |  110|11||0|
  8  |  7 5 0 1  |  1792 1280    0  256  |  1521  812  377  366  |  110|100||0|
  8  |  7 5 0 2  |  1792 1280    0  512  |  1581  850  393  384  |  110|100||1|
  9  |  3 3 0 1  |  1536 1536    0  512  |  1713  907  427  423  |  10|10||0|
  9  |  4 3 0 1  |  2048 1536    0  512  |  1743  922  430  427  |  11|10||0|
  9  |  5 3 0 1  |  2560 1536    0  512  |  1755  930  434  434  |  100|10||0|
  9  |  5 4 0 1  |  2560 2048    0  512  |  1849  987  460  453  |  100|11||0|
  9  |  6 4 0 1  |  3072 2048    0  512  |  2215 1195  572  548  |  101|11||0|
  9  |  6 4 1 1  |  3072 2048  512  512  |  2451 1293  630  607  |  101|11|0|0|
  9  |  7 4 1 1  |  3584 2048  512  512  |  2481 1311  640  615  |  110|11|0|0|
  9  |  8 4 1 1  |  4096 2048  512  512  |  2663 1396  691  660  |  111|11|0|0|
  9  |  8 4 2 1  |  4096 2048 1024  512  |  2751 1433  719  686  |  111|11|1|0|
 10  |  4 2 1 1  |  4096 2048 1024 1024  |  2817 1458  733  695  |  11|1|0|0|
 10  |  4 2 2 1  |  4096 2048 2048 1024  |  3437 1781  876  836  |  11|1|1|0|
 10  |  4 2 2 2  |  4096 2048 2048 2048  |  4039 2071 1030  979  |  11|1|1|1|
 10  |  4 3 2 2  |  4096 3072 2048 2048  |  4217 2145 1071 1021  |  11|10|1|1|
\end{verbatim}   
\end{document}